\def\pgf@circ@spst@path#1{\pgf@circ@bipole@path{spst}{#1}}
\tikzset{switch/.style = {\circuitikzbasekey, /tikz/to path=\pgf@circ@spst@path, l=#1}}
\tikzset{spst/.style = {switch = #1}}
\let\proof\@undefined                        
\let\endproof\@undefined                  
\algnewcommand{\algorithmicgoto}{\textbf{go to}}%
\algnewcommand{\Goto}[1]{\algorithmicgoto~\ref{#1}}%
\algnewcommand{\LineComment}[1]{\Statex \(\triangleright\) #1}
\algnewcommand{\LineCommentN}[1]{\Statex \hspace{1cm}\(\triangleright\) #1}
\newtheorem{prop}{Proposition} 
\newtheorem{cor}{Corollary}
\newtheorem{thm}{Theorem}
	\newtheorem{assumption}{Assumption}
\newtheorem{lem}{Lemma}
\newtheorem{defn}{Definition}
\newtheorem{problem}{Problem}
\let\oldbibliography\thebibliography
\renewcommand{\thebibliography}[1]{%
  \oldbibliography{#1}%
}
\begin{document}

\title{\LARGE \bf Guaranteed State Estimation via Indirect Polytopic Set Computation for Nonlinear Discrete-Time Systems} 

\author{%
Mohammad Khajenejad, Fatima Shoaib, Sze Zheng Yong\\
\thanks{
M. Khajenejad, Fatima Shoaib and S.Z. Yong are with the School for Engineering of Matter, Transport and Energy, Arizona State University, Tempe, AZ, USA (e-mail: \{mkhajene, fshoaib, szyong\}@asu.edu).}
\thanks{This work is partially supported by NSF grant CNS-1932066.}
}

\maketitle
\thispagestyle{empty}
\pagestyle{empty}

\begin{abstract}
This paper proposes novel set-theoretic approaches for state estimation in bounded-error discrete-time nonlinear systems, subject to nonlinear observations/constraints. By transforming the polytopic sets that are characterized as zonotope bundles (ZB) and/or constrained zonotopes (CZ), from the state space to the space of the generators of ZB/CZ, we leverage a recent result on remainder-form mixed-monotone decomposition functions to   
compute the propagated set, i.e., a ZB/CZ that is guaranteed to enclose the set of the state trajectories of the considered system. 
Further, by applying the remainder-form decomposition functions to the nonlinear observation function, we derive the updated set, i.e., an enclosing ZB/CZ
of the intersection of the propagated set and the set of states that are compatible/consistent with the observations/constraints. Finally, we show that the mean value extension 
result in \cite{rego2020guaranteed} for computing propagated sets can also be extended to compute the updated set when the observation function is nonlinear.  
 \end{abstract}

\vspace{-0.15cm}
\section{Introduction} \vspace{-0.075cm}
State estimation is crucial in several research fields such as fault detection and isolation \cite{combastel2008fault}, localization problem \cite{jaulin2009nonlinear} and state-feedback control \cite{dahleh19871}. In such settings, Bayesian/stochastic estimation approaches such as particle or Kalman filtering can be applied if distributions/stochastic descriptions of uncertainties are known. However, in bounded-error settings where distribution-free set-valued uncertainties are considered, guaranteed sets of true values of state trajectories which are compatible/consistent with constraints/observations are desired. Obtaining the exact characterization of such sets that contain the evolution of the system states is very complicated and mostly intractable \cite{kieffer2004guaranteed}, hence developing set-theoretic approaches to tractably derive enclosures to such sets, while trying to make the enclosing sets as tight as possible, is critical. 

\emph{Literature review.} In the context of bounded-error settings, where dynamical systems are subject to distribution-free and bounded uncertainties, several seminal studies have proposed set-membership state estimation approaches for discrete-time constrained systems, to compute enclosing sets to all possible system trajectories \cite{chisci1996recursive,le2013zonotopic,rego2020guaranteed}. A well-known strategy, which is common to all these methods is finding an enclosing set to the image set of the dynamics vector field, i.e., \emph{propagation/prediction} step, as well as refining the obtained propagated set by finding an enclosure to its intersection with the set of states that are compatible/consistent with the observation/measurements, i.e., \emph{update} step. 

In case of linear systems with polytopic initial set, it is theoretically shown that tight (exact) enclosures can be obtained \cite{girard2008efficient}. However, even for linear systems, the computational complexity of polytopic propagation is extensive and grows dramatically with time \cite{shamma1999set}. Hence, simpler sets such as parallelotopes \cite{vicino1996sequential,chisci1996recursive}, ellipsoids \cite{khajenejad2019simultaneous,polyak2004ellipsoidal,khajenejadasimultaneous}, intervals \cite{zheng2016design,khajenejad2020simultaneousinterval1,wang2015interval,khajenejad2020simultaneousinterval2} or zonotopes \cite{le2013zonotopic,combastel2015merging} have been used to characterize the enclosures. However, structural limitations of these sets sometimes leads to conservative enclosures. To address this, the work in \cite{scott2016constrained} introduced \emph{constrained zonotpes} to ease some of the limitations imposed by zonotopes, while \emph{zonotope bundles} were proposed in \cite{althoff2011zonotope} to describe the intersection of zonotopes without explicit computations. 

Regarding nonlinear systems, obtaining efficient set-valued estimates is still very challenging, contrary to the linear case. A classical approach to tackle this problem has been to use interval arithmetic-based inclusion functions \cite{moore2009introduction} to propagate the current enclosing sets through the nonlinear dynamics and then to apply interval-based set inversion techniques (e.g., SIVIA) to find upper approximations for the set of compatibles states with the current measurements \cite{jaulin2009nonlinear,jaulin2016inner}. These approaches are computationally very efficient, but unfortunately, due to the nature of interval arithmetic, the resultant bounds are mostly conservative. 

Alternatively, given linear observation functions, zonotopic propagation methods have been developed in \cite{combastel2005state,alamo2005guaranteed,alamo2008set}, based on the first order Taylor expansion, the mean value extension or DC programming. However, significant errors are caused in update step due to the symmetry of zonotopes, even for linear measurements \cite{scott2016constrained}. More recently, the interesting work in \cite{rego2020guaranteed} proposed constrained zonotopic propagation and update algorithms for discrete-time nonlinear systems with linear observation functions, based on  mean value and first order Taylor extensions.           

 \emph{Contributions.} This paper  proposes novel methods for recursive state estimation (consisting of propagation and update steps) using polytopes (equivalently, constrained zonotopes or zonotope bundles) for nonlinear bounded-error discrete-time systems with nonlinear observation functions.  Leveraging remainder-form mixed-monotone decomposition functions \cite{khajenejad2021tightremainder} and following the standard propagation and update approach, this paper bridges the gap between constrained zonotope (CZ)/zonotope bundle (ZB)-based set-valued state estimation and nonlinear observation/constrained functions. In particular, for the propagation step, we transform the prior ZB/CZ's into the space of CZ/ZB generators, which are interval-valued, and further transform  
 the vector field into two components, one that is proven to attain tight image sets, as well as a linear remainder function, for which a family of  remainder-form mixed-monotone decomposition functions \cite{khajenejad2021tightremainder} can be obtained. Each of the decomposition functions produce enclosures of the state trajectories and thus, we can intersect them to obtain the desired propagated ZB/CZ enclosures. 

Moreover, we show that a similar idea, i.e., transformation from the state+uncertainty space to the space of generators of CZ/ZB's, can be used for the update step to find a family of enclosures to the \emph{generalized nonlinear intersection} of the propagated set with the set of states that is compatible with the observations, where the final enclosures are proven to be ZB/CZ's. Furthermore, we prove that the mean value extension approach used in \cite{rego2020guaranteed} to enclose a multiplication of an interval matrix to a constrained zonotope, {can also be} leveraged for the update step when the observation function is nonlinear. Finally, we compare our proposed approaches together and with the mean value extension-based approach in \cite{rego2020guaranteed}, {implementing it on} two examples, one with a linear and the other with a nonlinear observation function.
\section{Preliminaries}
In this section, we briefly introduce some of the main concepts that we use throughout the paper, as well as some important existing results that will be used for deriving our main results and for comparison.

{\emph{{Notation}.}} $\mathbb{N},\mathbb{N}_a, \mathbb{R}^n$ and $\mathbb{R}^{m \times n}$ denote the set of positive integers, the first $a$ positive integers, the $n$-dimensional Euclidean space and the space of $m$ by $n$ real matrices, respectively. For $\mathcal{Z},\mathcal{W} \subset \mathbb{R}^n, R \in \mathbb{R}^{m \times n}, \mathcal{Y} \subset \mathbb{R}^m$, and $\mu:\mathbb{R}^n \to \mathbb{R}^m$, $R\mathcal{Z}\triangleq \{Rz|z \in \mathcal{Z}\}, \mathcal{Z} \oplus \mathcal{W} \triangleq \{z+w | z \in \mathcal{Z},w \in \mathcal{W} \}$, $\mathcal{Z} \ominus \mathcal{W} \triangleq \{z-w | z \in \mathcal{Z},w \in \mathcal{W} \}$, $\mu(\mathcal{Z}) \triangleq \{\mu(z) | z \in \mathcal{Z}\}$ and $\mathcal{Z} \cup_{\mu} \mathcal{Y} \triangleq \{z \in \mathcal{Z} | \mu(z) \in \mathcal{Y} \}$ denote the linear mapping, Minkowski sum, set subtraction, general (nonlinear) mapping and generalized (nonlinear) intersection, respectively. Moreover, the transpose, 
Moore-Penrose pseudoinverse and $(i,j)$-th element 
of $R$ are given by $R^\top$, 
$R^\dagger$ and $R_{ij}$, 
 while 
its row support is $r=\textstyle{\mathrm{rowsupp}}(R) \in \mathbb{R}^m$, where $r_i=0$ if the $i$-th row of $R$ is zero and $r_i=1$ otherwise, $\forall i \in \mathbb{N}_m$. Furthermore, $\mathbb{B}^n_{\infty} \triangleq \{z \in \mathbb{R}^n | \|z\|_{\infty} \leq 1\}$ and $\mathbf{0}_n$ denote the $\infty$-norm hyperball and the zero vector in $\mathbb{R}^n$, respectively. For $z \in \mathbb{R}^n$, $\textstyle{\mathrm{diag}}(z)$ is a diagonal matrix in $\mathbb{R}^{n \times n}$, with its diagonal elements being the corresponding elements of $z$. $\langle \cdot,\cdot\rangle$ denotes the inner product operator.
\begin{defn}[Intervals, Polytopes, Constrained Zonotopes (CZ) and Zonotope Bundles (ZB)]\label{defn:CZ_ZB}
A set $\mathcal{Z} \subset \mathbb{R}^n$ is a(n) (i) interval, (ii) polytope, (iii) constrained zonotope (CZ), (iv) zonotope bundle (ZB), if
\begin{enumerate}[(i)]
\item $\exists \underline{z},\overline{z} \in \mathbb{R}^n$ such that $\mathcal{Z}=[\underline{z},\overline{z}] \triangleq \{z \in \mathbb{R}^n | \underline{z} \leq z \leq \overline{z}\}$. An interval matrix can be defined similarly, in an element-wise manner. 
\item $\exists A_p \in \mathbb{R}^{n_p \times n},b_p \in \mathbb{R}^{n_p}$ such that $\mathcal{Z}=\{A_p,b_p\}_P \triangleq \{ z \in \mathbb{R}^n |A_pz\leq b_p \}$. 
\item $\exists \tilde{G} \in \mathbb{R}^{n \times n_g}, \tilde{c}\in \mathbb{R}^{n},\tilde{A} \in \mathbb{R}^{n_c \times n_g},\tilde{b} \in \mathbb{R}^{n_c}$ such that $\mathcal{Z}=\{\tilde{G},\tilde{c},\tilde{A},\tilde{b}\}_{CZ} \triangleq \{ \tilde{G}\xi+\tilde{c} |\xi \in \mathbb{B}^{n_g},\tilde{A}\xi=\tilde{b} \}$. $n_g$ and $n_c$ are called the number of CZ's generators and constraints, respectively.
\item $\mathcal{Z}$ can be represented as an intersection of $S \in \mathbb{N}$ zonotopes, i.e., $\exists \{{G}_s \in \mathbb{R}^{n \times \hat{n}_s},{c}_s \in \mathbb{R}^n\}_{s=1}^S$ such that $\mathcal{Z}=\bigcap\limits_{s=1}^S\{{G}_s,{c}_s\}_{Z} \triangleq \bigcap\limits_{s=1}^S \{{G}_s\zeta+{c}_s |\zeta \in \mathbb{B}^{\hat{n}_g}\}$, with $\hat{n}_s,s=1,\dots,S$, being called the number of generators for each zonotope. 
\end{enumerate}
It is worth mentioning that using CORA2020 \cite{althoff2020cora}, if a set $\mathcal{Z}$ has each of the polytopic, CZ or ZB representations, it can be equivalently and tightly transformed into the other two representations. This is represented thorough this paper as:  $$\textstyle \mathcal{Z} =\{A_p,b_p\}_P \equiv \{\tilde{G},\tilde{c},\tilde{A},\tilde{b}\}_{CZ} \equiv \bigcap\limits_{s=1}^S\{{G}_s,{c}_s\}_{Z}.$$ 
\end{defn}
\begin{prop}\label{prop:interval_cen_rep}
Consider an interval vector $\mathbb{IZ} \triangleq [\underline{z},\overline{z}] \subset \mathbb{IR}^{n}$ and an interval matrix $\mathbb{J} \in \mathbb{IR}^{n \times m}$. Then, 
$\mathbb{IZ}$ and $\mathbb{J}$ can be equivalently represented as 
\begin{align}
\mathbb{IZ} &\triangleq [\underline{z},\overline{z}] \equiv \textstyle{\mathrm{mid}}(\mathbb{IZ})\oplus \frac{1}{2}\textstyle{\mathrm{diag}}(\textstyle{\mathrm{diam}}(\mathbb{IZ}))\mathbb{B}^n_{\infty},\label{eq:interv1} \\
\mathbb{J} &\triangleq [\underline{J},\overline{J}] \equiv \textstyle{\mathrm{mid}}(\mathbb{J})\oplus \mathbb{J}_{\Delta}\label{eq:interv2},
\end{align}
where for $q \in \{\mathbb{IZ},\mathbb{J}\}$, $\textstyle{\mathrm{mid}}(q) \triangleq \frac{1}{2}(\overline{z}+\underline{z})$, $\textstyle{\mathrm{diam}}(q) \triangleq (\overline{z}-\underline{z})$, and 
$\mathbb{J}_{\Delta} \in \mathbb{IR}^{n \times m}$ is an interval matrix that is defined as $[\mathbb{J}_{\Delta}]_{ij} \triangleq \frac{1}{2}[-\textstyle{\mathrm{diam}}(\mathbb{J})_{ij} \ \textstyle{\mathrm{diam}}(\mathbb{J})_{ij}], \forall i \in \mathbb{N}_{n},\forall j \in \mathbb{N}_{m}$. 
\end{prop}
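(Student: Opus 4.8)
The plan is to verify both equivalences coordinate-wise (resp.\ entry-wise), since the two claimed representations separate across components. First I would treat the interval vector case \eqref{eq:interv1}. Fix an index $i \in \mathbb{N}_n$ and consider the scalar affine map $\xi_i \mapsto \mathrm{mid}(\mathbb{IZ})_i + \tfrac12\,\mathrm{diam}(\mathbb{IZ})_i\,\xi_i = \tfrac12(\overline{z}_i+\underline{z}_i) + \tfrac12(\overline{z}_i-\underline{z}_i)\,\xi_i$. Because $\overline{z}_i \ge \underline{z}_i$, this map is monotone nondecreasing, sends $\xi_i=-1$ to $\underline{z}_i$ and $\xi_i=1$ to $\overline{z}_i$, and is therefore a bijection from $[-1,1]$ onto $[\underline{z}_i,\overline{z}_i]$. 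Taking the Cartesian product over $i \in \mathbb{N}_n$ --- which is exactly the constraint $\|\xi\|_\infty \le 1$, i.e., $\xi \in \mathbb{B}^n_\infty$ --- and stacking the scalar maps into the vector relation $z = \mathrm{mid}(\mathbb{IZ}) + \tfrac12\,\mathrm{diag}(\mathrm{diam}(\mathbb{IZ}))\,\xi$ shows that $[\underline{z},\overline{z}]$ coincides with $\mathrm{mid}(\mathbb{IZ}) \oplus \tfrac12\,\mathrm{diag}(\mathrm{diam}(\mathbb{IZ}))\,\mathbb{B}^n_\infty$, which is \eqref{eq:interv1}.

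For the interval matrix case \eqref{eq:interv2}, I would argue analogously entry-wise. A matrix $J$ lies in $\mathbb{J}=[\underline{J},\overline{J}]$ iff $J_{ij} \in [\underline{J}_{ij},\overline{J}_{ij}]$ for every $(i,j)$; writing $J_{ij} = \mathrm{mid}(\mathbb{J})_{ij} + \Delta_{ij}$ and using $\mathrm{mid}(\mathbb{J})_{ij} = \tfrac12(\overline{J}_{ij}+\underline{J}_{ij})$ together with $\mathrm{diam}(\mathbb{J})_{ij} = \overline{J}_{ij}-\underline{J}_{ij}$, this is equivalent to $\Delta_{ij} \in [\underline{J}_{ij} - \mathrm{mid}(\mathbb{J})_{ij},\ \overline{J}_{ij} - \mathrm{mid}(\mathbb{J})_{ij}] = \tfrac12[-\mathrm{diam}(\mathbb{J})_{ij},\ \mathrm{diam}(\mathbb{J})_{ij}] = [\mathbb{J}_\Delta]_{ij}$. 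Since these entry-wise conditions are independent across $(i,j)$, the perturbation $\Delta$ ranges over exactly the interval matrix $\mathbb{J}_\Delta$, so $\mathbb{J} = \mathrm{mid}(\mathbb{J}) \oplus \mathbb{J}_\Delta$ understood element-wise, which is \eqref{eq:interv2}.

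I do not expect a genuine obstacle here; the statement is a change-of-variables bookkeeping identity. The only point that warrants care is fixing conventions: ``$\equiv$'' is to be read as equality of the represented sets of vectors (resp.\ matrices), and the Minkowski sum on the right-hand side of \eqref{eq:interv2} must be interpreted entry-wise, consistent with the element-wise definition of interval matrices in Definition~\ref{defn:CZ_ZB}(i). Once those are in place, the affine reparametrization being a coordinate-wise bijection from $[-1,1]$ onto $[\underline{z}_i,\overline{z}_i]$ does all the work, and no further estimates are needed.
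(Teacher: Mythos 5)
Your proof is correct and follows essentially the same route as the paper's: both rest on the affine change of variables $z = \mathrm{mid}(\mathbb{IZ}) + \tfrac12\,\mathrm{diag}(\mathrm{diam}(\mathbb{IZ}))\,\xi$ with $\xi \in \mathbb{B}^n_\infty$, the paper writing it as a chain of vector inequalities and you phrasing it coordinate-wise (and you spell out the entry-wise matrix case that the paper dismisses as a straightforward extension). The only cosmetic caveat is that for a degenerate component with $\mathrm{diam}(\mathbb{IZ})_i = 0$ the scalar map is constant rather than a bijection, but surjectivity onto $[\underline{z}_i,\overline{z}_i]$ is all the set equality needs, so nothing breaks.
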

\begin{proof}
To prove \eqref{eq:interv1}, consider $z \in \mathbb{IZ} \Leftrightarrow \underline{z} \leq z \leq \overline{z} \Leftrightarrow \underline{z}-\textstyle{\mathrm{mid}}(\mathbb{IZ}) \leq z-\textstyle{\mathrm{mid}}(\mathbb{IZ}) \leq \overline{z}-\textstyle{\mathrm{mid}}(\mathbb{IZ})\Leftrightarrow -\frac{1}{2}\textstyle{\mathrm{diam}}(\mathbb{IZ}) \leq z-\textstyle{\mathrm{mid}}(\mathbb{IZ}) \leq \frac{1}{2}\textstyle{\mathrm{diam}}(\mathbb{IZ})\Leftrightarrow \textstyle{\mathrm{mid}}(\mathbb{IZ})-\frac{1}{2}\textstyle{\mathrm{diam}}(\mathbb{IZ}) \leq z \leq \frac{1}{2}\textstyle{\mathrm{diam}}(\mathbb{IZ})+\textstyle{\mathrm{mid}}(\mathbb{IZ})\Leftrightarrow \exists \xi \in \mathbb{B}^n_{\infty}, s.t.\ z=\textstyle{\mathrm{mid}}(\mathbb{IZ})+\frac{1}{2}\textstyle{\mathrm{diag}}(\textstyle{\mathrm{diam}}(\mathbb{IZ}))\xi \Leftrightarrow z \in \textstyle{\mathrm{mid}}(\mathbb{IZ})\oplus \frac{1}{2}\textstyle{\mathrm{diag}}(\textstyle{\mathrm{diam}}(\mathbb{IZ}))\mathbb{B}^n_{\infty}$. The result in \eqref{eq:interv2} is a straightforward extension of \eqref{eq:interv1}. 
\end{proof}
\begin{prop}\cite[Theorem 1]{rego2020guaranteed}\label{prop:Rego_CZ_bounding}
Let $\mathcal{X}=\{G,c,A,b\}_{CZ} \subset \mathbb{R}^m$ be a constrained zonotope with $n_g$ generators and $n_c$ constraints, and $\mathbb{J} \in \mathbb{IR}^{n \times m}$ be an interval matrix. Consider the set $S=\mathbb{J}\mathcal{X} \triangleq \{Jx | J \in \mathbb{J}, x \in \mathcal{X} \} \subset \mathbb{R}^n$. Let $\overline{\mathcal{X}}=\{\overline{G},\overline{c}\}_Z$ be a zonotope satisfying $\mathcal{X} \subseteq \overline{\mathcal{X}}$ and $\overline{c} \in \mathbb{R}^{\overline{n}_g}$. Let $\mathbf{m} \in \mathbb{R}^n$ be an interval vector such that $\mathbf{m} \supset (\mathbb{J}-\textstyle{\mathrm{mid}}(\mathbb{J}))\overline{c}$ and $\textstyle{\mathrm{mid}}(\mathbf{m})=\mathbf{0}_n$. Let $P \in \mathbb{R}^{n \times n}$ be a diagonal matrix defined as follows. $\forall i=1,\dots,n$: 
\begin{align}\label{eq:Rego_P}
P_{ii}=\frac{1}{2}\textstyle{\mathrm{diam}}(\mathbf{m}_i)+\frac{1}{2}\sum_{j=1}^{\overline{n}_g} \sum_{k=1}^m \textstyle{\mathrm{diam}}(\mathbb{J}_{ik})|\overline{G}_{kj}|.
\end{align}
Then, $\mathcal{S} \subseteq \textstyle{\mathrm{mid}}(\mathbb{J})\mathcal{X} \oplus P\mathbb{B}^n_{\infty}$ 
\begin{align}\label{eq:CZ_bounding}
\quad \quad \quad =\{\begin{bmatrix}\textstyle{\mathrm{mid}}(\mathbb{J})G & P\end{bmatrix}\hspace{-.1cm}, \textstyle{\mathrm{mid}}(\mathbb{J})c, \begin{bmatrix} A & 0_{n_g \times n}\end{bmatrix}, b \}_{CZ}. 
\end{align}
\end{prop}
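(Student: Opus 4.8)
The plan is to verify the set inclusion pointwise and then read off the constrained-zonotope representation from standard CZ arithmetic. First I would fix an arbitrary $s \in \mathcal{S}$, so $s = Jx$ for some $J \in \mathbb{J}$ and $x \in \mathcal{X}$, and decompose the interval matrix about its midpoint: setting $\Delta J \triangleq J - \mathrm{mid}(\mathbb{J})$, Proposition~\ref{prop:interval_cen_rep} gives $|\Delta J_{ik}| \le \frac{1}{2}\,\mathrm{diam}(\mathbb{J}_{ik})$ for all $i,k$, and $s = \mathrm{mid}(\mathbb{J})x + \Delta J\, x$. The term $\mathrm{mid}(\mathbb{J})x$ belongs to $\mathrm{mid}(\mathbb{J})\mathcal{X}$ by definition, so everything reduces to showing $\Delta J\, x \in P\mathbb{B}^n_\infty$, i.e., $|(\Delta J\, x)_i| \le P_{ii}$ for each $i \in \mathbb{N}_n$.

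The key step is to pass from $\mathcal{X}$ to its zonotopic over-approximation. Since $x \in \mathcal{X} \subseteq \overline{\mathcal{X}} = \{\overline{G},\overline{c}\}_Z$, there is $\zeta \in \mathbb{B}^{\overline{n}_g}_\infty$ with $x = \overline{c} + \overline{G}\zeta$, hence $\Delta J\, x = \Delta J\,\overline{c} + \Delta J\,\overline{G}\,\zeta$. For the constant part, $\Delta J \in \mathbb{J}-\mathrm{mid}(\mathbb{J})$ implies $\Delta J\,\overline{c} \in (\mathbb{J}-\mathrm{mid}(\mathbb{J}))\overline{c} \subseteq \mathbf{m}$, and because $\mathrm{mid}(\mathbf{m})=\mathbf{0}_n$ each component satisfies $|(\Delta J\,\overline{c})_i| \le \frac{1}{2}\,\mathrm{diam}(\mathbf{m}_i)$. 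For the variable part, the triangle inequality and $\|\zeta\|_\infty \le 1$ give $|(\Delta J\,\overline{G}\,\zeta)_i| \le \sum_{j=1}^{\overline{n}_g}\sum_{k=1}^m |\Delta J_{ik}|\,|\overline{G}_{kj}| \le \frac{1}{2}\sum_{j=1}^{\overline{n}_g}\sum_{k=1}^m \mathrm{diam}(\mathbb{J}_{ik})\,|\overline{G}_{kj}|$. Summing the two bounds yields precisely $P_{ii}$ as defined in \eqref{eq:Rego_P}, so $\Delta J\, x \in P\mathbb{B}^n_\infty$ and $s \in \mathrm{mid}(\mathbb{J})\mathcal{X} \oplus P\mathbb{B}^n_\infty$; as $s$ was arbitrary, this proves the inclusion $\mathcal{S} \subseteq \mathrm{mid}(\mathbb{J})\mathcal{X} \oplus P\mathbb{B}^n_\infty$.

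The closed form \eqref{eq:CZ_bounding} then follows by composing three elementary facts: a linear image of a CZ keeps the same constraints, so $\mathrm{mid}(\mathbb{J})\mathcal{X} = \{\mathrm{mid}(\mathbb{J})G,\ \mathrm{mid}(\mathbb{J})c,\ A,\ b\}_{CZ}$; the box $P\mathbb{B}^n_\infty$ is the (constraint-free) zonotope with generator matrix $P$ and center $\mathbf{0}_n$; and the Minkowski sum of constrained zonotopes concatenates generator matrices, adds centers, and stacks the constraint matrices block-diagonally, the second block being vacuous here. Together these give generators $[\,\mathrm{mid}(\mathbb{J})G\ \ P\,]$, center $\mathrm{mid}(\mathbb{J})c$, and constraints $\{[\,A\ \ 0\,],\,b\}$, as claimed.

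I do not anticipate a genuine obstacle: the argument is essentially interval-arithmetic bookkeeping. The only point that needs care, and the reason the auxiliary objects $\overline{\mathcal{X}}$ and $\mathbf{m}$ appear in the statement, is that one cannot bound $\Delta J\, x$ directly over $\mathcal{X}$ without dragging in the equality constraints $A\xi = b$; passing to a zonotopic enclosure lets the generator coordinates range freely over $[-1,1]$ so that the double sum in \eqref{eq:Rego_P} is an honest upper bound, while the center $\overline{c}$, which is not unit-bounded, must be handled separately through the zero-centered interval $\mathbf{m}$.
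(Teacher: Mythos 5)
Your argument is correct: the midpoint split $Jx=\mathrm{mid}(\mathbb{J})x+\Delta J\,x$, the passage to the enclosing zonotope $\overline{\mathcal{X}}$ so that $\Delta J\,x=\Delta J\,\overline{c}+\Delta J\,\overline{G}\zeta$ with the two pieces bounded by $\tfrac12\mathrm{diam}(\mathbf{m}_i)$ and the double sum in \eqref{eq:Rego_P}, and the standard CZ linear-map/Minkowski-sum rules all check out. Note that the paper itself gives no proof of this proposition --- it is quoted verbatim from \cite[Theorem 1]{rego2020guaranteed} --- and your reconstruction follows essentially the same route as the proof in that reference, so there is nothing to correct.
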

\begin{prop}[RRSR Propagation Approach] \cite[Theorem 2]{rego2020guaranteed}\label{prop:Rego_propagation}
Let $f:\mathbb{R}^n \times \mathbb{R}^{n_w} \to \mathbb{R}^n$ be continuously differentiable and $\nabla_xf$ denote the gradient of $f$ with respect to its first argument. Let $\mathcal{X}=\{G_x,c_x,A_x,b_x\}_{CZ} \subset \mathbb{R}^n$ and $\mathcal{W} \subset \mathbb{R}^{n_w}$ be constrained zonotopes. Choose any $h \in \mathcal{X}$. If $\mathcal{Z}$ is a constrained zonotope such that $f(h,\mathcal{W}) \subseteq \mathcal{Z}$ and $\mathbb{J} \in \mathbb{IR}^{n \times n}$ is an interval matrix satisfying $\nabla^\top_x f(\mathcal{X},\mathcal{W}) \subseteq \mathbb{J}$, then 
\begin{align}\label{eq:Rego_prediction}
f(\mathcal{X},\mathcal{W}) \subseteq \mathcal{Z} \oplus \textstyle{\mathrm{mid}}(\mathbb{J})(\mathcal{X}-h) \oplus \tilde{P}\mathbb{B}^n_{\infty}, 
\end{align}
where $\tilde{P}$ can be computed through \eqref{eq:Rego_P}, using $\mathbb{J}$ and an enclosing zonotope to $\mathcal{X}-h$. 
\end{prop}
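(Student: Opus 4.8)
The plan is to obtain the inclusion from a component-wise mean value theorem, and then reduce the leftover interval-matrix/constrained-zonotope product to an application of Proposition~\ref{prop:Rego_CZ_bounding}. First I would fix arbitrary $x \in \mathcal{X}$ and $w \in \mathcal{W}$. Since a constrained zonotope is convex and both $h$ and $x$ lie in $\mathcal{X}$, the entire segment $\{h+t(x-h) : t \in [0,1]\}$ is contained in $\mathcal{X}$. Applying the mean value theorem to each scalar component $f_i(\cdot,w)$ along this segment produces points $\xi_i \in \mathcal{X}$ with $f_i(x,w)-f_i(h,w) = \nabla_x f_i(\xi_i,w)^\top(x-h)$ for every $i \in \mathbb{N}_n$. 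Stacking these rows gives a matrix $J \in \mathbb{R}^{n\times n}$ with $f(x,w) = f(h,w) + J(x-h)$; since its $(i,j)$ entry equals $\partial f_i/\partial x_j$ evaluated at the point $(\xi_i,w) \in \mathcal{X}\times\mathcal{W}$, it lies in the $(i,j)$ entry of $\mathbb{J}$, and because an interval matrix is the Cartesian product of its entrywise intervals, $J \in \mathbb{J}$. Hence $f(x,w) \in f(h,w)\oplus\mathbb{J}(x-h)$.

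Next I would take the union over all $x\in\mathcal{X}$ and $w\in\mathcal{W}$. Using the hypothesis $f(h,\mathcal{W})\subseteq\mathcal{Z}$ and the definition of the interval-matrix image, this yields $f(\mathcal{X},\mathcal{W}) \subseteq \mathcal{Z}\oplus\mathbb{J}(\mathcal{X}-h)$. Because constrained zonotopes are closed under affine maps, $\mathcal{X}-h$ is again a constrained zonotope and admits an enclosing zonotope $\{\overline{G},\overline{c}\}_Z \supseteq \mathcal{X}-h$ (obtainable, e.g., via CORA2020). Proposition~\ref{prop:Rego_CZ_bounding}, applied with $\mathcal{X}$ replaced by $\mathcal{X}-h$ and with this enclosing zonotope, then gives $\mathbb{J}(\mathcal{X}-h) \subseteq \textstyle{\mathrm{mid}}(\mathbb{J})(\mathcal{X}-h)\oplus\tilde{P}\mathbb{B}^n_\infty$, where $\tilde{P}$ is the diagonal matrix built from \eqref{eq:Rego_P} using $\mathbb{J}$ and $\overline{G},\overline{c}$. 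Substituting this into the preceding inclusion and using monotonicity of the Minkowski sum produces \eqref{eq:Rego_prediction}.

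The step I expect to be the main obstacle is the careful justification of the mean value argument: the mean value theorem does not hold for vector-valued maps with a single common intermediate point, so the derivation must be carried out component-wise, and one must verify that the row-wise intermediate points $\xi_i$ — which in general differ from one component to another — still assemble into a single matrix belonging to the interval hull $\mathbb{J}$ of the transposed gradient over all of $\mathcal{X}\times\mathcal{W}$. This is exactly where convexity of $\mathcal{X}$ (ensuring each $\xi_i\in\mathcal{X}$) and the defining property of $\mathbb{J}$ as an entrywise over-approximation of $\nabla_x^\top f(\mathcal{X},\mathcal{W})$ are essential; once this is settled, the remainder is routine bookkeeping with Minkowski sums together with the direct invocation of Proposition~\ref{prop:Rego_CZ_bounding}.
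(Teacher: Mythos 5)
The paper does not prove this proposition --- it is imported verbatim from \cite[Theorem 2]{rego2020guaranteed} --- so there is no in-paper proof to compare against, and your argument is the standard one by which that result is established. Your component-wise mean value theorem, combined with the entrywise (Cartesian-product) structure of the interval matrix $\mathbb{J}$ to absorb the row-dependent intermediate points $\xi_i$ into a single $J\in\mathbb{J}$, correctly yields $f(\mathcal{X},\mathcal{W})\subseteq\mathcal{Z}\oplus\mathbb{J}(\mathcal{X}-h)$, and invoking Proposition~\ref{prop:Rego_CZ_bounding} on $\mathcal{X}-h$ to replace $\mathbb{J}(\mathcal{X}-h)$ by $\textstyle{\mathrm{mid}}(\mathbb{J})(\mathcal{X}-h)\oplus\tilde{P}\mathbb{B}^n_{\infty}$ is exactly the intended use of $\tilde{P}$, so the proof is correct.
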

\begin{defn}[Mixed-Monotone (One-Sided) Decomposition Functions For Discrete-Time Systems]\label{defn:dec_functions}
\cite[Definitions 3,4]{khajenejad2021tightremainder}
A mapping $f_d:\mathcal{Z} \times \mathcal{Z} \subset \mathbb{R}^{2n} \to \mathbb{R}^m$ is a discrete-time mixed-monotone decomposition function with respect to $f:\mathcal{Z} \subset \mathbb{R}^n \to \mathbb{R}^m$, over the set $\mathcal{Z}$, if it satisfies the following: 
$f_d(x,x)=f(x),x \geq x' \Rightarrow f_d(x,y) \geq f_d(x',y), y \geq y' \Rightarrow f_d(x,y) \leq f_d(x,y'), \forall x,y,x',y' \in \mathcal{Z}$. Further, if there exists two mixed-monotone mappings $\overline{f}_d,\underline{f}_d:\mathcal{Z} \times \mathcal{Z} \to \mathbb{R}^m$, such that for any $\underline{z},z,\overline{z} \in \mathcal{Z}$, the following holds: $\underline{z} \leq z \leq \overline{z} \Rightarrow \underline{f}_d(\underline{z},\overline{z}) \leq f(z) \leq \overline{f}_d(\overline{z},\underline{z})$, then $\overline{f}_d$ and $\underline{f}_d$ are called upper and lower decomposition functions for $f$ over $\mathcal{Z}$,  respectively.    
\end{defn}
It is trivial to see that $\forall x \in [\underline{x},\overline{x}], f_d(\underline{x},\overline{x}) \leq f(x) \leq f_d(\overline{x},\underline{x})$, where $f_d$ is a decomposition function of $f$.
\begin{prop}[Tight and Tractable Remainder-Form Upper and Lower Decomposition Decomposition Functions]\cite[Theorems 1,2,3 ]{khajenejad2021tightremainder}\label{prop:Remainder_DF}
Consider a locally Lipschitz vector field ${f}_i:\mathbb{IZ} \triangleq [\underline{z},\overline{z}] \subseteq \mathbb{IR}^{n_{z}} \to \mathbb{R}$. Let 
$\mathbb{N}_{n_{z}} \triangleq \{1,\dots,n_{z}\}$ and $\overline{{J}}^{\tilde{f}}_i,\underline{{J}}^{\tilde{f}}_i \in \mathbb{R}^{n_{z}}$ denote the upper and lower bounds for the Jacobian matrix (vector) of ${f}_i$ over $\mathbb{IZ}$. Suppose that Assumption \ref{assumption:mix-lip} in Section \ref{sec:formulation} holds. Then, $f_i(\cdot)$ admits a family of mixed-monotone remainder-form decomposition functions denoted as $\{{f}_{d,i}(z,\hat{z};m,h(\cdot))\}_{\mathbf{m} \in \mathbf{M}_i,h(\cdot) \in \mathcal{H}_{\mathbf{M}^c_i}}$, that is parametrized by a set of supporting vectors $\mathbf{m} \in \mathbf{M}^c_i$ 
\begin{align}
 \nonumber \mathbf{m} \in \mathbf{M}^c_i \triangleq \{&\mathbf{m} \in \mathbb{R}^ {n_z} | \mathbf{m}_j = \min(\underline{{J}}^f_{ij},{0}) \ \lor \\
 &\mathbf{m}_j = \max(\overline{{J}}^f_{i,j},{0}) 
  ,
 \forall j \in \mathbb{N}_{n_z} \}.\label{eq:Mj}
  \end{align}
and a locally Lipschitz remainder function $h(\cdot) \in \mathcal{H}_{M^c_i}$, where 
{\begin{align}\label{eq:decomp1} 
\hspace{-.3cm}f_{d,i}(z,\hat{z};\mathbf{m},h(\cdot))\hspace{-.1cm}=\hspace{-.1cm} h(\zeta_{\mathbf{m}}(\hat{z},{z}))\hspace{-.1cm}+\hspace{-.1cm}f_i(\zeta_{\mathbf{m}}(z,\hat{z}))\hspace{-.05cm}) \hspace{-.1cm}-\hspace{-.1cm}h_i(\zeta_{\mathbf{m}}(z,\hat{z})),
\end{align}}
$\zeta_{\mathbf{m}}(z, \hat{z})=[\zeta_{\mathbf{m},1}(z,\hat{z}),\dots,\zeta_{\mathbf{m},n_z}(z,\hat{z})]^\top$, $\forall j \in \mathbb{N}_{n_z}$:
\begin{align}\label{eq:corners}
{\zeta}_{\mathbf{m},j}(z,{\hat{z}})\hspace{-.1cm}=\hspace{-.1cm}\begin{cases} \hat{z}_j, \ \text{if} \ \mathbf{m}_j \hspace{-.1cm}=\hspace{-.1cm} \max(\overline{{J}}^f_{i,j},{0}) \\
{z}_j, \ \text{if} \ \mathbf{m}_j \hspace{-.1cm}=\hspace{-.1cm} \min(\underline{{J}}^f_{i,j},{0}) 
   \end{cases},
 \end{align} 
 and $\mathcal{H}_{\mathbb{M}_i} \triangleq \{h:\mathbb{IZ} \to \mathbb{R} | [\underline{J}^h(z),\overline{J}^h_C(z)] \subseteq \mathbb{M}_i,\forall z \in \mathbb{IZ} \}$.
 Moreover, search for the tightest mixed-monotone upper and lower  remainder-form decomposition functions in the form of \eqref{eq:decomp1} can be equivalently restricted to the set of ``linear remainders", parametrized by $\mathbf{m} \in \mathbf{M}^c_i$, i.e., 
 linear remainders $\{{h}(\cdot)\}_{\mathbf{m} \in \mathbf{M}^c_i}=\{{\langle}\mathbf{m}^i,\cdot{\rangle}\}_{\mathbf{m} \in \mathbf{M}^c_i}$. 
\end{prop}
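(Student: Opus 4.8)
My plan is to establish the three assertions of the proposition in turn, using a single device throughout: the additive split $f_i = h + g$ with $g \triangleq f_i - h$. I will call coordinate $j$ \emph{max-type} when $\mathbf{m}_j = \max(\overline{J}^f_{ij},0)$ and \emph{min-type} when $\mathbf{m}_j = \min(\underline{J}^f_{ij},0)$, as in \eqref{eq:Mj}. The key preliminary observation is that membership $h(\cdot) \in \mathcal{H}_{\mathbb{M}_i}$ is exactly what forces $\partial_j h \ge \mathbf{m}_j \ge 0$ on max-type coordinates and $\partial_j h \le \mathbf{m}_j \le 0$ on min-type ones; consequently $g$ has a coordinate-wise sign-definite Jacobian, namely $\partial_j g = \partial_j f_i - \partial_j h \le \overline{J}^f_{ij} - \mathbf{m}_j \le 0$ on max-type coordinates and $\partial_j g \ge \underline{J}^f_{ij} - \mathbf{m}_j \ge 0$ on min-type ones. (When $f_i$ is only locally Lipschitz these partials are read off the Clarke generalized Jacobian, which is where Assumption~\ref{assumption:mix-lip} enters.)

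For the first two assertions I would first check that $h_d(z,\hat{z}) \triangleq h(\zeta_{\mathbf{m}}(\hat{z},z))$ is a mixed-monotone decomposition function of $h$ and that $g_d(z,\hat{z}) \triangleq f_i(\zeta_{\mathbf{m}}(z,\hat{z})) - h(\zeta_{\mathbf{m}}(z,\hat{z}))$ is one of $g$; since decomposition functions add, $h_d + g_d$ --- which is precisely the right-hand side of \eqref{eq:decomp1} --- is then a decomposition function of $f_i$. Diagonal consistency follows from $\zeta_{\mathbf{m}}(z,z) = z$ in \eqref{eq:corners}. For monotonicity, the coordinate selection \eqref{eq:corners} decouples the dependence: on a max-type coordinate $j$, $z_j$ enters only the term $h(\zeta_{\mathbf{m}}(\hat{z},z))$ with coefficient $\partial_j h \ge 0$ and $\hat{z}_j$ enters only $g(\zeta_{\mathbf{m}}(z,\hat{z}))$ with coefficient $\partial_j g \le 0$, and the min-type coordinates are symmetric --- so \eqref{eq:decomp1} is non-decreasing in its first argument and non-increasing in its second. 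Evaluating against $f_{d,i}(z,z;\mathbf{m},h(\cdot)) = f_i(z)$ then yields $f_{d,i}(\underline{z},\overline{z};\mathbf{m},h(\cdot)) \le f_i(z) \le f_{d,i}(\overline{z},\underline{z};\mathbf{m},h(\cdot))$ for $\underline{z} \le z \le \overline{z}$, by Definition~\ref{defn:dec_functions} and the remark after it, which furnishes the claimed upper and lower decomposition functions (with, if one wishes, distinct admissible $(\mathbf{m},h(\cdot))$ for each).

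For the third assertion I would evaluate \eqref{eq:decomp1} at the two corners $p \triangleq \zeta_{\mathbf{m}}(\overline{z},\underline{z})$ and $q \triangleq \zeta_{\mathbf{m}}(\underline{z},\overline{z})$ of $[\underline{z},\overline{z}]$, which by \eqref{eq:corners} disagree in every coordinate, obtaining
\begin{align*}
f_{d,i}(\overline{z},\underline{z};\mathbf{m},h(\cdot)) &= f_i(p) + h(q) - h(p),\\
f_{d,i}(\underline{z},\overline{z};\mathbf{m},h(\cdot)) &= f_i(q) + h(p) - h(q),
\end{align*}
so the only $h$-dependence sits in $h(q) - h(p) = \int_0^1 \langle \nabla h(p + t(q-p)), q - p\rangle \, dt$. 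Since $(q-p)_j \ge 0$ on max-type coordinates, where $\partial_j h \ge \mathbf{m}_j$, and $(q-p)_j \le 0$ on min-type ones, where $\partial_j h \le \mathbf{m}_j$, this integral is bounded below by $\langle \mathbf{m}, q - p\rangle$, with equality for the linear remainder $h(\cdot) = \langle \mathbf{m},\cdot\rangle$. Substituting back shows that the linear remainder simultaneously minimizes the upper function and maximizes the lower one over all admissible $h(\cdot)$, i.e., it gives the tightest upper and lower decomposition functions of the form \eqref{eq:decomp1} for that $\mathbf{m}$; and if the supporting vector is additionally allowed to vary over the whole box rather than only its corners, restricting it to $\mathbf{M}^c_i$ follows because, with $h$ linear, the two expressions above are affine in each $\mathbf{m}_j$ on each piece while the corners $p,q$ themselves switch only at $\mathbf{m}_j \in \{0,\underline{J}^f_{ij},\overline{J}^f_{ij}\}$.

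I expect the main obstacle to be pinning down the admissibility class $\mathcal{H}_{\mathbb{M}_i}$ so that it is simultaneously wide enough that every $h(\cdot)$ in it renders \eqref{eq:decomp1} a legitimate decomposition function, and tight enough that the ``corner'' remainder $h(\cdot) = \langle \mathbf{m},\cdot\rangle$ actually solves the tightness optimization rather than being merely feasible --- which is what forces $\mathbb{M}_i$ to couple $\mathbf{m}$ to the Jacobian bounds $\underline{J}^f_i,\overline{J}^f_i$ coordinate-wise in the way used above. A secondary difficulty is carrying the monotonicity and line-integral steps through under only local Lipschitzness, and verifying that the pair $(p,q)$ genuinely extremizes $g$ over $\mathbb{IZ}$, not merely bounds it, so that the split introduces no avoidable conservatism through its $g$-part.
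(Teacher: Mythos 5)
The paper never proves Proposition~\ref{prop:Remainder_DF}: it is imported wholesale from \cite[Theorems 1,2,3]{khajenejad2021tightremainder}, so there is no in-paper proof to compare yours against, and the only fair benchmark is the argument of that reference. Your reconstruction is essentially that argument, and it is sound: you split $f_i$ into a remainder $h$ plus a Jacobian-sign-stable part $g=f_i-h$, check that $h(\zeta_{\mathbf{m}}(\hat{z},z))$ and $g(\zeta_{\mathbf{m}}(z,\hat{z}))$ are each mixed-monotone decomposition functions of their summands (exactly what the sign pattern $\partial_j h\ge \mathbf{m}_j\ge 0\ge \partial_j g$ on max-type coordinates, mirrored on min-type ones, delivers), add them, and then get tightness of the linear remainder from the line-integral comparison $h(q)-h(p)\ge\langle\mathbf{m},q-p\rangle$ at the two opposite corners $p=\zeta_{\mathbf{m}}(\overline{z},\underline{z})$, $q=\zeta_{\mathbf{m}}(\underline{z},\overline{z})$. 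Two remarks. First, your reading of the admissibility class is the intended one: the displayed definition of $\mathcal{H}_{\mathbb{M}_i}$ in the statement is garbled (and the $h_i$ in \eqref{eq:decomp1} should be $h$), and the coordinatewise coupling of $J^h$ to $\mathbf{m}$ that you infer is precisely what makes both halves of the split mixed-monotone, so identifying it is the real content of the first assertion. Second, the one step you leave loosest --- that the supporting vector can be restricted to the finite corner set $\mathbf{M}^c_i$ rather than the full interval hull of the Jacobian bounds --- is a separate optimization argument in the cited reference; your piecewise-affine sketch is the right mechanism but would need the dependence of $f_i(p)$ and $f_i(q)$ on where $\mathbf{m}_j$ sits relative to $\{\underline{J}^f_{ij},0,\overline{J}^f_{ij}\}$ worked out to be complete. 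Your closing worry about whether $(p,q)$ ``genuinely extremizes'' $g$ is immaterial here: only the decomposition-function inequalities are asserted in this proposition, and tightness of the JSS bound is invoked later (Corollary~\ref{cor:supproting_vectors}) rather than claimed here.
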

\begin{cor}\label{cor:supproting_vectors}
Consider a locally Lipschitz mapping $\tilde{f}(\cdot):{\mathbb{I}\Xi} \triangleq [\underline{\xi},\overline{\xi}] \subseteq \mathbb{IR}^{n_{\xi}} \to \mathbb{R}^{n_x}$ that satisfies the assumptions in Proposition \ref{prop:Remainder_DF}. Let us define: $\mathbb{N}_{n_{x}} \triangleq \{1,\dots,n_{x}\}$ and
\begin{align}\label{eq:JSS_maker}
  \mathbf{H}_{\tilde{f}}  \triangleq  \{H  \in  \mathbb{R}^{ n_{x} \times n_{\xi}} | H^\top_{i,:}\in \mathbf{M}^c_i, \forall i \in \mathbb{N}_{n_{x}}  \}, 
\end{align} 
where $\mathbf{M}^c_i$ is defined in \eqref{eq:Mj}. Then, $\forall \xi \in \mathbb{I}\Xi,\forall H \in \mathbf{H}^{\tilde{f}}, \tilde{g}^H(\xi)\triangleq \tilde{f}(\xi)-H\xi $ is proven to be a Jacobian sign-stable (JSS) function, i.e., {$\forall i \in \mathbb{N}_{n_x}, \forall j \in \mathbb{N}_{n_z}, J^{H}_{ij}(\xi)\triangleq \frac{\partial f_i}{\partial \xi_j}(\xi)\geq 0, \forall \xi \in  \mathbb{I}\Xi$ or $J^{H}_{ij}(\xi)\triangleq \frac{\partial \tilde{g}^H_i}{\partial \xi_j}(\xi)\leq 0, \forall \xi \in  \mathbb{I}{\Xi}$.}. Consequently, $\tilde{g}^H(\cdot)$ can be tightly bounded in each dimension $i \in \mathbb{N}_{n_x}$ by remainder-form decomposition functions $\tilde{g}_{d,i}(\cdot,\cdot;H^\top_{i,:},\langle H^\top_{i,:},\cdot \rangle)$, constructed using \eqref{eq:decomp1}--\eqref{eq:corners}, as follows: 
\begin{align*}
 \tilde{g}_{d,i}(\underline{\xi},\overline{\xi};H^\top_{i,:},\langle H^\top_{i,:},\cdot \rangle)  \leq \tilde{g}_i(\xi) \leq  \tilde{g}_{d,i}(\overline{\xi},\underline{\xi};H^\top_{i,:},\langle H^\top_{i,:},\cdot \rangle),  
\end{align*} 
where, by \cite[Lemma 3]{khajenejad2021tightremainder} and defining $m \triangleq H^\top_{i,:}$, we obtain $\tilde{g}_{d,i}(\overline{\xi},\underline{\xi};{m},\langle {m},\cdot\rangle)=\tilde{f}_i(\zeta^+_m)+{m}^\top(\zeta^-_{{m}}-\zeta^+_{{m}})$, $\tilde{g}_{d,i}(\underline{\xi},\overline{\xi};{m}^\top,\langle {m},\cdot\rangle)=\tilde{f}_i(\zeta^-_{{m}})+{m}^\top(\zeta^+_{{m}}-\zeta^-_{{m}})$, $\zeta^+_m \triangleq \zeta_m({\xi},\underline{\xi})$, $\zeta^-_m \triangleq \zeta_m(\underline{\xi},\overline{\xi})$, with $\zeta_m(\cdot,\cdot)$ given in \eqref{eq:corners}. 
\begin{proof}
The proof follows the lines of the proof of \cite[Lemma 1, Proposition 10 and Corollary 2]{khajenejad2021tightremainder}.
\end{proof}
\end{cor}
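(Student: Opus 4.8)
The plan is to prove the corollary's two assertions in order — first that $\tilde{g}^H(\cdot)\triangleq\tilde{f}(\cdot)-H(\cdot)$ is Jacobian sign-stable (JSS) for every $H\in\mathbf{H}_{\tilde{f}}$, and then that the displayed remainder-form inequalities hold with the stated closed-form bounds — essentially transporting the arguments of \cite[Lemma 1, Proposition 10, Corollary 2]{khajenejad2021tightremainder} to the map $\tilde{f}$.

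First I would fix $i\in\mathbb{N}_{n_x}$, $j\in\mathbb{N}_{n_{\xi}}$, and $H\in\mathbf{H}_{\tilde{f}}$, and note that wherever $\tilde{f}_i$ is differentiable on $\mathbb{I}\Xi$ one has $\frac{\partial \tilde{g}^H_i}{\partial\xi_j}(\xi)=\frac{\partial\tilde{f}_i}{\partial\xi_j}(\xi)-H_{ij}$. By \eqref{eq:JSS_maker}, $H^\top_{i,:}\in\mathbf{M}^c_i$, so by \eqref{eq:Mj} either $H_{ij}=\min(\underline{J}^{f}_{ij},0)$ or $H_{ij}=\max(\overline{J}^{f}_{ij},0)$. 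In the first case $H_{ij}\le\underline{J}^{f}_{ij}\le\frac{\partial\tilde{f}_i}{\partial\xi_j}(\xi)$ uniformly over $\mathbb{I}\Xi$ (since $\underline{J}^{f}_{ij}$ lower-bounds that partial on the box), hence $\frac{\partial\tilde{g}^H_i}{\partial\xi_j}\ge 0$ on $\mathbb{I}\Xi$; in the second case $H_{ij}\ge\overline{J}^{f}_{ij}\ge\frac{\partial\tilde{f}_i}{\partial\xi_j}(\xi)$, hence $\frac{\partial\tilde{g}^H_i}{\partial\xi_j}\le 0$ on $\mathbb{I}\Xi$. Thus every partial of $\tilde{g}^H$ keeps a fixed sign on the box, which is exactly the JSS property, and this sign determines which branch of \eqref{eq:corners} is active for $\zeta_{m,j}$.

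Then I would invoke Proposition \ref{prop:Remainder_DF} for $\tilde{f}_i$ with supporting vector $m\triangleq H^\top_{i,:}\in\mathbf{M}^c_i$ and remainder $h(\cdot)=\langle m,\cdot\rangle$: because $\tilde{g}^H_i=\tilde{f}_i-\langle m,\cdot\rangle$ was just shown JSS, $h$ lies in the admissible class $\mathcal{H}_{\mathbf{M}^c_i}$, so the ``linear remainder suffices'' conclusion of Proposition \ref{prop:Remainder_DF} applies and $f_{d,i}(\cdot,\cdot;m,\langle m,\cdot\rangle)$ from \eqref{eq:decomp1} is a valid mixed-monotone decomposition function of $\tilde{f}_i$. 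Substituting the linear $h$ into \eqref{eq:decomp1}--\eqref{eq:corners} and evaluating at $(z,\hat{z})=(\overline{\xi},\underline{\xi})$ and $(\underline{\xi},\overline{\xi})$ collapses the three terms of \eqref{eq:decomp1} into $\tilde{f}_i(\zeta^+_m)+m^\top(\zeta^-_m-\zeta^+_m)$ and $\tilde{f}_i(\zeta^-_m)+m^\top(\zeta^+_m-\zeta^-_m)$, with the coordinate-wise corners $\zeta^\pm_m$ of \eqref{eq:corners} — this is the computation of \cite[Lemma 3]{khajenejad2021tightremainder}. The sandwich $\tilde{g}_{d,i}(\underline{\xi},\overline{\xi};\cdot)\le\tilde{g}_i(\xi)\le\tilde{g}_{d,i}(\overline{\xi},\underline{\xi};\cdot)$ on $[\underline{\xi},\overline{\xi}]$ then follows from the decomposition-function property recorded right after Definition \ref{defn:dec_functions}, while tightness — attainment at $\zeta^\pm_m$ — follows because the monotone (JSS) component is extremized over the box exactly at those corners, as in \cite[Proposition 10, Corollary 2]{khajenejad2021tightremainder}.

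The hard part will be the one non-bookkeeping step: verifying that $h(\cdot)=\langle m,\cdot\rangle$ indeed belongs to $\mathcal{H}_{\mathbf{M}^c_i}$ and that $\tilde{f}$ inherits the standing hypotheses of Proposition \ref{prop:Remainder_DF} (local Lipschitzness and Assumption \ref{assumption:mix-lip}), together with checking that $\underline{J}^{f}$ and $\overline{J}^{f}$ are valid Jacobian bounds for $\tilde{f}$ over $\mathbb{I}\Xi$; once these are secured, the remainder is substitution into \eqref{eq:decomp1}--\eqref{eq:corners} and matching signs with \eqref{eq:corners}.
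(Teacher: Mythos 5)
Your proposal is correct and follows essentially the same route as the paper, which simply defers to \cite[Lemma 1, Proposition 10, Corollary 2]{khajenejad2021tightremainder}: you reconstruct exactly that argument, first showing the sign-stability of $\partial \tilde{g}^H_i/\partial\xi_j = \partial\tilde{f}_i/\partial\xi_j - H_{ij}$ from the two branches of \eqref{eq:Mj}, then invoking Proposition \ref{prop:Remainder_DF} with the linear remainder $\langle m,\cdot\rangle$ and evaluating \eqref{eq:decomp1}--\eqref{eq:corners} at the corners. The caveats you flag (that $\underline{J}^{f},\overline{J}^{f}$ must be Jacobian bounds for $\tilde f$ itself over $\mathbb{I}\Xi$, and that $\langle m,\cdot\rangle\in\mathcal{H}_{\mathbf{M}^c_i}$) are precisely the hypotheses the cited results require, so no gap remains.
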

\section{Problem Formulation} \label{sec:Problem} \label{sec:formulation}
\vspace{-0.1cm}
\noindent\textbf{\emph{System Assumptions.}} 
Consider the following bounded-error nonlinear constrained discrete-time system: 
\begin{align} \label{eq:system}
\begin{array}{ll}
x_{k+1}&=\hat{f}(x_k,w_k,u_k)=f(z_k), \\
\hat{\mu}(x_k, u_k)&=\mu(x_k) \in \mathcal{Y}_k, \ x_0 \in \hat{\mathcal{X}}_0,w_k \in \mathcal{W}_k,  \end{array}
\end{align}
where $z_k \triangleq [x^\top_k   w^\top_k]^\top$, $x_k \in \mathbb{R}^{n_x}$ is the state vector, $w_k \in \mathcal{W}_k \subset \mathbb{R}^{n_w}$ is the augmentation of all the exogenous uncertain inputs, e.g., bounded process disturbance/noise and internal uncertainties such as uncertain parameters and $u_k \in \mathcal{U}_k \subseteq \mathbb{R}^{n_u}$ is the \emph{known} input signal. Furthermore, 
$f :\mathbb{R}^{n_z} \to \mathbb{R}^{n_x}$ (with $n_z \triangleq n_x+n_w$) and $\mu : \mathbb{R}^{n_x}  \to  \mathbb{R}^{n_{\mu}}$ 
are nonlinear state vector field and observation/constraint mapping, respectively, which are well-defined, given $\hat{f}(\cdot,\cdot)$ and $\hat{\mu}(\cdot,\cdot)$, as well as the fact that $u_k$ is known. {Note that the mapping $\mu(\cdot)$ along with the set $\mathcal{Y}_k$ characterize all the existing and/or a prior known or even manufactured/redundant constraints over the states, observations and measurement noise signals or uncertain parameters at time step $k$.}.

The initial state estimate $\hat{\mathcal{X}}_0$ is assumed to be a known set a prior.  
Moreover, we assume the following assumptions.
\begin{assumption}\label{ass:polytopic_uncertainty}
The sets $\hat{\mathcal{X}}_0$, as well as $\mathcal{W}_k,\mathcal{U}_k, \forall k \geq 0$ are prior known polytopes, or equivalently constrained zonotopes or zonotope bundles (cf. Definition \ref{defn:CZ_ZB}). 
\end{assumption}
\begin{assumption}\label{assumption:mix-lip}
The nonlinear vector fields $f(\cdot)$ and $\mu(\cdot)$ are locally Lipschitz on their domains. Consequently, they are differentiable and have bounded Jacobian matrix elements, almost everywhere. We further assume that given any $\mathcal{Z} \subset \mathbb{R}^{n_z}$ and $\mathcal{X} \subset \mathbb{R}^{n_x}$, some upper and lower bounds for all elements of Jacobian matrices for $f(\cdot)$ and $\mu(\cdot)$ over $\mathcal{Z}$ and $\mathcal{X}$ are available or can be computed. In other words, $\exists \underline{{J}}^f,\overline{{J}}^f \in \mathbb{R}^{n_x \times n_z},\underline{{J}}^\mu,\overline{{J}}^\mu \in \mathbb{R}^{n_{\mu} \times n_x}$, such that: $\underline{{J}}^f \leq {{J}}^f(z) \leq \overline{{J}}^f,\underline{{J}}^{\mu} \leq {{J}}^{\mu}(x) \leq \overline{{J}}^{\mu},\forall z \in {\mathcal{Z}},\forall x \in \mathcal{X}$, where ${J}^f(z)$ and ${J}^{\mu}(x)$ denote the Jacobian matrices of the mappings $f(\cdot)$ and $\mu(\cdot)$ at the points $z$ and $x$, respectively. 
\end{assumption}
In this paper, we aim to propose novel set-membership approaches for obtaining set-valued state estimates for bounded-error nonlinear systems in the form of \eqref{eq:system}. More formally, proceeding the well-known two-step i) Propagation (prediction), ii) update (refinement) approach and given the initial uncertainty set $\hat{\mathcal{X}}_0$, we seek to solve the following corresponding problems at each time step $k \in \mathbb{N}$ (where $\mathcal{Z}_k \triangleq \mathcal{X}^u_k \times \mathcal{U}_k, \mathcal{X}^p_0 \triangleq \hat{\mathcal{X}}_0$):
\begin{problem}[Propagation]\label{prob:propagation}
Find the ``propagated set" $\mathcal{X}^p_k$ that satisfies
\begin{align}\label{eq:propagation}
f(\mathcal{Z}_{k-1})\hspace{-.1cm}\triangleq \hspace{-.1cm}\{\hat{f}(x,w,u_{k-1})|x \hspace{-.1cm} \in \hspace{-.1cm}\mathcal{X}^u_{k-1}, w \in \mathcal{W}_k \} \subseteq \mathcal{X}^p_k.
\end{align}
\end{problem}
\begin{problem}[Update]\label{prob:update}
Find the updated set $\mathcal{X}^u_k$ that satisfies 
\begin{align}\label{eq:update}
\mathcal{X}^p_k \cap_{\mu} \mathcal{Y}_k \triangleq \{x \in \mathcal{X}^p_k | \mu(x) \in \mathcal{Y}_k \}  \subseteq \mathcal{X}^u_k.
\end{align}
\end{problem}
\section{Main Results} \label{sec:main}
To address the above problems, we propose set-theoretic approaches which are based on our recently developed tight remainder-form mixed-monotone decomposition functions \cite{khajenejad2021tightremainder} (also cf. Proposition \ref{prop:Remainder_DF} and Corollary \ref{cor:supproting_vectors}), to compute supersets of i) $f(\mathcal{Z})$, and ii) $\mathcal{Z}_f \cap_{\mu} \mathcal{Z}_{\mu}$, given locally Lipschitz functions $f,\mu$ that satisfy Assumption \ref{assumption:mix-lip} and arbitrary sets $\mathcal{Z},\mathcal{Z}_f,\mathcal{Z}_{\mu}$ that satisfy Assumption \ref{ass:polytopic_uncertainty}. Then, armed with the aforementioned proposed set-membership approaches, we sequentially apply the one-step operations in \eqref{eq:propagation} and \eqref{eq:update} 
at each time step $k \in \mathbb{N}$ to system \eqref{eq:system}, with the purpose of  computing i) the propagated set $\mathcal{X}^p_k$ that satisfies the inclusion in  \eqref{eq:propagation}, 
and ii) the updated set $\mathcal{X}^u_k$ that satisfies the inclusion in \eqref{eq:update}, respectively. 
\subsection{Decomposition-Based ZB/CZ propagation} \label{sec:propagation}
In this section, we address Problem \ref{prob:propagation}, assuming that the initial set is a zonotope bundle (Lemma \ref{lem:ZB_propagation}) or a constraint zonotope (Lemma \ref{lem:CZ_propagation}). The main idea is to ``lift up" the initial ZB/CZ's from the $z$-space, i.e., the space of augmented state $x$ and process uncertainty $w$, to intervals in the $\xi$-space, i.e., the space of ZB/CZ generators. Then, based on our recent results in \cite{khajenejad2021tightremainder}, we decompose the transformed vector fields in the $\xi$-space into two components, a Jacobian sign stable (JSS) and a linear remainder mapping (cf. Corollary \ref{cor:supproting_vectors}). Finally, we apply our recently developed family of mixed-monotone remainder-form decomposition functions to find enclosures to the JSS components, with interval domains, which are proven to be tight by Corollary \ref{cor:supproting_vectors}. Using these tight obtained bounds and thanks to linearity of the remainders, we show that by augmenting and intersecting all the obtained enclosures, the resultant set is a ZB/CZ. We formally summarize our proposed Decomposition-Based ZB/CZ approach through the following Lemmas \ref{lem:ZB_propagation} and \ref{lem:CZ_propagation}.         
\begin{lem}[Decomposition-Based ZB Propagation]\label{lem:ZB_propagation}
Let $f:\mathcal{Z}\subset \mathbb{R}^{n_z}  \to \mathbb{R}^{n_x}$ satisfies Assumption \ref{assumption:mix-lip}. Let $\mathcal{Z}$ be a ZB in $\mathbb{R}^{n_z}$, i.e., $\mathcal{Z}=\bigcap\limits_{s=1}^{S} \{G_{s},c_{s}\}_Z$, and $\forall s \in \mathbb{N}_{S} \triangleq \{1,\dots,S\}$, $n_s$ be the number of generators of the corresponding zonotope. Then, the following set inclusion holds:
\begin{align}
f(\mathcal{Z}) \subseteq \mathcal{ZB}_f \triangleq\bigcap\limits_{s=1}^{S} \bigcap\limits_{H_s \in \mathbf{H}_{\tilde{f}_s}}\{G^{H_s}_{s},c^{H_s}_{s}\}_Z, \label{eq:ZB_prediction}
\end{align}
where $G^{H_s}_{s} \triangleq [H_s \  \frac{1}{2}\textstyle{\mathrm{diag}}(\overline{g}^{H_s}_{s}-\underline{g}^{H_s}_{s})], \ c^{H_s}_{s} \triangleq  \frac{1}{2}(\overline{g}^{H_s}_{s}+\underline{g}^{H_s}_{s})$, 
\begin{align}
& \overline{g}^{H_s}_{s,i} \triangleq \label{eq:ov_g} g^{s}_{i,d}(\mathbf{1}_{n_{s}},-\mathbf{1}_{n_{s}};{H_s^\top}_{(i,:)},\langle {H_s^\top}_{(i,:)},\cdot\rangle),\\
&\underline{g}^{H_s}_{s,i} \triangleq \label{eq:und_g} g^{s}_{i,d}(-\mathbf{1}_{n_{s}},\mathbf{1}_{n_{s}};{H_s^\top}_{(i,:)},\langle {H_s^\top}_{(i,:)},\cdot\rangle),
\end{align}
$g^{s}_{i,d}(\cdot,\cdot;{H_s^\top}_{(i,:)},\langle {H_s^\top}_{(i,:)},\cdot\rangle)$ is the tight mixed-monotone decomposition function (cf. Proposition \ref{prop:Remainder_DF}) for the JSS mapping $g^{H_s}_{s,i}(\xi) \triangleq \tilde{f}_{s,i}(\xi)-\langle {H_s^\top}_{(i,:)}, \xi \rangle : \mathbb{B}^{n_s}_{\infty} \to \mathbb{R}^{n_x}$, $\mathbf{H}_{\tilde{f}_s}$ is defined in Corollary \ref{cor:supproting_vectors} (with  the  corresponding  function  being $\tilde{f}_s$) 
and $\tilde{f}_{s}(\xi) \triangleq f({c}_s+{G}_s\xi)$.
\end{lem}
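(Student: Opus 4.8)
The plan is to prove the set inclusion \eqref{eq:ZB_prediction} \emph{pointwise}. Fix an arbitrary $y \in f(\mathcal{Z})$ and write $y = f(z)$ for some $z \in \mathcal{Z}$. Since $\mathcal{ZB}_f$ is the intersection of the zonotopes $\{G^{H_s}_{s}, c^{H_s}_{s}\}_Z$ over $s \in \mathbb{N}_S$ and $H_s \in \mathbf{H}_{\tilde{f}_s}$, it suffices to show that $y$ lies in each of them. So I fix $s$ and $H_s$, and the goal reduces to exhibiting a point of $\mathbb{B}^{n_s+n_x}_\infty$ that $G^{H_s}_{s}$ maps, after translation by $c^{H_s}_{s}$, to $y$.

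First I would lift $z$ into the generator space of the $s$-th zonotope: because $z \in \mathcal{Z} \subseteq \{G_s,c_s\}_Z$, there is $\xi^{(s)} \in \mathbb{B}^{n_s}_\infty$ with $z = c_s + G_s\xi^{(s)}$, hence $y = f(c_s + G_s\xi^{(s)}) = \tilde{f}_s(\xi^{(s)})$. Since $H_s \in \mathbf{H}_{\tilde{f}_s}$, Corollary \ref{cor:supproting_vectors} applies to $\tilde{f}_s$ on the interval domain $\mathbb{B}^{n_s}_\infty = [-\mathbf{1}_{n_s},\mathbf{1}_{n_s}]$ and guarantees that $g^{H_s}_{s}(\xi) = \tilde{f}_s(\xi) - H_s\xi$ is Jacobian sign-stable there, with each component $g^{H_s}_{s,i}$ admitting the tight remainder-form decomposition function $g^{s}_{i,d}(\cdot,\cdot;{H_s^\top}_{(i,:)},\langle {H_s^\top}_{(i,:)},\cdot\rangle)$ of Proposition \ref{prop:Remainder_DF}. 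Evaluating this decomposition function at the endpoints of the box (using the bracketing property stated right after Definition \ref{defn:dec_functions} with $\underline{\xi} = -\mathbf{1}_{n_s}$, $\overline{\xi} = \mathbf{1}_{n_s}$) gives, componentwise and for every $\xi \in \mathbb{B}^{n_s}_\infty$,
\[
\underline{g}^{H_s}_{s,i} \le g^{H_s}_{s,i}(\xi) \le \overline{g}^{H_s}_{s,i},
\]
with $\underline{g}^{H_s}_{s,i}$, $\overline{g}^{H_s}_{s,i}$ exactly as in \eqref{eq:und_g}--\eqref{eq:ov_g}. In particular $g^{H_s}_{s}(\xi^{(s)}) \in [\underline{g}^{H_s}_{s},\overline{g}^{H_s}_{s}]$, so by the centered representation of Proposition \ref{prop:interval_cen_rep} there is $\eta \in \mathbb{B}^{n_x}_\infty$ with $g^{H_s}_{s}(\xi^{(s)}) = c^{H_s}_{s} + \frac{1}{2}\textstyle{\mathrm{diag}}(\overline{g}^{H_s}_{s} - \underline{g}^{H_s}_{s})\eta$. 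Combining,
\[
y = H_s\xi^{(s)} + g^{H_s}_{s}(\xi^{(s)}) = \begin{bmatrix} H_s & \frac{1}{2}\textstyle{\mathrm{diag}}(\overline{g}^{H_s}_{s} - \underline{g}^{H_s}_{s})\end{bmatrix}\!\begin{bmatrix}\xi^{(s)}\\ \eta\end{bmatrix} + c^{H_s}_{s},
\]
and since $[\,(\xi^{(s)})^\top\ \eta^\top\,]^\top \in \mathbb{B}^{n_s+n_x}_\infty$ this is precisely the statement $y \in \{G^{H_s}_{s}, c^{H_s}_{s}\}_Z$ for the matrices defined in the lemma. As $s$ and $H_s$ were arbitrary, $y \in \mathcal{ZB}_f$, and the inclusion follows.

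The step I expect to require the most care is not this chain of inclusions but the verification that $\tilde{f}_s(\xi) = f(c_s + G_s\xi)$ genuinely satisfies all the hypotheses needed to invoke Corollary \ref{cor:supproting_vectors} and Proposition \ref{prop:Remainder_DF} on $\mathbb{B}^{n_s}_\infty$: local Lipschitzness is inherited from $f$ through composition with the affine map $\xi \mapsto c_s + G_s\xi$, and by the chain rule $J^{\tilde{f}_s}(\xi) = J^f(c_s + G_s\xi)\,G_s$, so the element-wise Jacobian bounds for $f$ granted by Assumption \ref{assumption:mix-lip} (valid over any set containing $\{G_s,c_s\}_Z \supseteq \mathcal{Z}$) transfer to computable bounds $\underline{J}^{\tilde{f}_s}, \overline{J}^{\tilde{f}_s}$ on $\mathbb{B}^{n_s}_\infty$. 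These are exactly the data feeding the definitions \eqref{eq:Mj} and \eqref{eq:JSS_maker} of $\mathbf{M}^c_i$ and $\mathbf{H}_{\tilde{f}_s}$, so one must keep the $\min/\max$-corner selection of \eqref{eq:corners} consistent with these bounds and with the endpoints $\pm\mathbf{1}_{n_s}$ of the generator box, which is what makes $\overline{g}^{H_s}_{s}$ and $\underline{g}^{H_s}_{s}$ the correct upper/lower bracketing values used above.
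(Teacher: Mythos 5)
Your proposal is correct and follows essentially the same route as the paper's proof: lifting into the generator space via $\tilde{f}_s(\xi)=f(c_s+G_s\xi)$, splitting off the linear part $H_s\xi$ to obtain a JSS remainder bounded tightly by the decomposition functions evaluated at $\pm\mathbf{1}_{n_s}$, converting the resulting interval to its centered zonotope form via Proposition \ref{prop:interval_cen_rep}, and intersecting over all $s$ and $H_s$. The only cosmetic difference is that you argue pointwise (tracking a single $\xi^{(s)}$ and exhibiting the generator vector $[(\xi^{(s)})^\top\ \eta^\top]^\top$ explicitly) whereas the paper phrases the same steps as set inclusions with a Minkowski sum, which yields the identical enclosing zonotope.
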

\begin{proof}
To show \eqref{eq:ZB_prediction}, $\forall s \in \mathbb{N}_S$, consider the zonotope $\mathcal{Z}_s \triangleq \{G_s,c_s\}_Z \triangleq \{z=G_s\xi+c_s|\xi \in \mathbb{B}^{n_s}_{\infty}\}$ and let us define $\tilde{f}_s(\xi):\mathbb{B}^{n_s}_{\infty} \to \mathbb{R}^{n_x} \triangleq f(G_s\xi+c_s)$ that implies
\begin{align}\label{eq:set_equiv}
f(\mathcal{Z}_s) \subseteq \tilde{f}_s(\mathbb{B}^{n_s}_{\infty}), \forall s \in \mathbb{N}_S .
\end{align}
On the other hand, note that by Corollary \ref{cor:supproting_vectors}, $\forall H_s \in \mathbf{H}_{\tilde{f}_s}$, $\tilde{f}_s(\cdot)$ can be decomposed as 
\begin{align}\label{eq:decomposed}
\tilde{f}_s(\xi)\hspace{-.1cm} =\hspace{-.1cm}g^{H_s}_s(\xi)\hspace{-.1cm}+\hspace{-.1cm}H_s\xi, \forall s \in \mathbb{N}_S, \forall \xi \in \mathbb{B}^{n_s}_{\infty},\forall H_s \in \mathbf{H}_{\tilde{f}_s} 
\end{align}
where $g^{H_s}_s(\xi)$ is a JSS function in $\mathbb{B}^{n_s}_{\infty}$ and $\mathbf{H}_{\tilde{f}_s}$ can be computed from \eqref{eq:JSS_maker}, with the corresponding function being $\tilde{f}_s$. Now \eqref{eq:set_equiv} and \eqref{eq:decomposed} together imply:
\begin{align}\label{eq:sum_decomp}
f(\mathcal{Z}_s) \subseteq g^{H_s}_s(\mathbb{B}^{n_s}_{\infty})\oplus H_s\mathbb{B}^{n_s}_{\infty}, \forall s \in \mathbb{N}_S, \forall H_s \in \mathbf{H}_{\tilde{f}_s}. 
\end{align}
Again, it follows from Corollary \ref{cor:supproting_vectors} and the fact that $g^{H_s}_s(\xi)$ is a JSS function that in each dimension $i \in \mathbb{N}_{n_x}$, $g^{H_s}_{s,i}(\xi)$ can be tightly bounded as $\underline {g}^{H_s}_{s,i}\leq g^{H_s}_{s,i}(\xi) \leq \overline {g}^{H_s}_{s,i}, \forall \xi \in \mathbb{B}^{n_s}_{\infty}, \forall H_s \in \mathbf{H}_{\tilde{f}_s}  $, with $\overline {g}^{H_s}_{s,i},\underline {g}^{H_s}_{s,i}$ given in \eqref{eq:ov_g} and \eqref{eq:und_g}, respectively. Augmenting all these $\mathbb{N}_{nx}$ one-dimensional inequalities yields the following set inclusion for all $s \in \mathbb{N}_{S}$ and all $H_s \in \mathbf{H}_{\tilde{f}_s} $:
${g}^{H_s}_{s}(\mathbb{B}^{n_s}_{\infty}) \subseteq [\underline {g}^{H_s}_{s},\overline {g}^{H_s}_{s}]=\frac{1}{2}((\underline {g}^{H_s}_{s}+\overline {g}^{H_s}_{s})\oplus \textstyle{\mathrm{diag}}(\overline {g}^{H_s}_{s}-\underline {g}^{H_s}_{s})\mathbb{B}^{n_x}_{\infty})$,  
where the last equality follows from Proposition \ref{prop:interval_cen_rep}. This, \eqref{eq:sum_decomp} and the fact that the inclusion in \eqref{eq:sum_decomp} holds for all $s \in \mathbb{N}_{S}$ and all $H_s \in \mathbf{H}_{\tilde{f}_s} $ and hence for the intersection of all of them, return the result in \eqref{eq:ZB_prediction}.
\end{proof}\vspace{-.3cm}
\begin{lem}[Decomposition-Based CZ Propagation]\label{lem:CZ_propagation}
Let $f:\mathcal{Z}\subset \mathbb{R}^{n_z}  \to \mathbb{R}^{n_x}$ satisfies Assumption \ref{assumption:mix-lip}. Let $\mathcal{Z}$ be a CZ in $\mathbb{R}^{n_z}$, i.e., $\mathcal{Z}= \{\tilde{G},\tilde{c},\tilde{A},\tilde{b}\}_{CZ}$, and $n_g$ be the number of generators of $\mathcal{Z}$. Then, the following set inclusion holds: 
\begin{align} 
f(\mathcal{Z}) \subseteq \mathcal{CZ}_f \triangleq \bigcap\limits_{H \in \mathbf{H}_{\tilde{f}}} \{\tilde{G}^H,\tilde{c}^H,\mathbb{A},\tilde{b}\}_{CZ},\label{eq:CZ_prediction} 
\end{align} 
where $\tilde{G}^H \triangleq [H \ \  \frac{1}{2}\textstyle{\mathrm{diag}}(\overline{g}^H-\underline{g}^H)], \mathbb{A} \triangleq [\tilde{A} \ \ 0_{n_g \times n_x}]$, 
\begin{align}
 \overline{g}^H_{i} &\triangleq \label{eq:ov_g_cz} \tilde{g}_{i,d}(\overline{\mathbf{l}}_{n_{g}},\underline{\mathbf{l}}_{n_{g}};H^\top_{i,:},\langle H^\top_{i,:},\cdot\rangle),\tilde{c}^H \hspace{-.1cm}\triangleq \hspace{-.1cm} \frac{1}{2}(\overline{g}^H+\underline{g}^H),\\
\underline{g}^H_{i} &\triangleq \label{eq:und_g_cz} \tilde{g}_{i,d}(\underline{\mathbf{l}}_{n_{g}},\overline{\mathbf{l}}_{n_{g}};H^\top_{i,:},\langle H^\top_{i,:},\cdot\rangle), \\
\nonumber \overline{\mathbf{l}}_{n_{g}} & \triangleq \min(\mathbf{1}_{n_g},\tilde{A}^\dagger\tilde{b}+\kappa \mathbf{r}_{n_g}),\underline{\mathbf{l}}_{n_{g}} \hspace{-.1cm} \triangleq \hspace{-.1cm} \max(-\mathbf{1}_{n_g},\tilde{A}^\dagger\tilde{b}\hspace{-.1cm}-\hspace{-.1cm}\kappa \mathbf{r}_{n_g})
\end{align}
$\tilde{g}_{i,d}(\cdot,\cdot;H^\top_{i,:},\langle H^\top_{i,:},\cdot\rangle)$ is the tight mixed-monotone decomposition function (cf. Proposition \ref{prop:Remainder_DF}) for the JSS mapping $\tilde{g}_{i}(\xi) \triangleq \tilde{f}_{i}(\xi)-\langle H^\top_{i,:}, \xi \rangle : \mathbb{B}^{n_g}_{\infty} \to \mathbb{R}^{n_x}$, $\mathbf{H}_{\tilde{f}}$ is defined in Corollary \ref{cor:supproting_vectors} and $\tilde{f}(\xi) \triangleq f(\tilde{c}+\tilde{G}\xi)$, $\mathbf{r}_{n_g} \triangleq \textstyle{\mathrm{rowsupp}}(I_{n_g}-\tilde{A}^\dagger \tilde{A})$ and $\kappa$ is a very large positive real number (infinity). 
\end{lem}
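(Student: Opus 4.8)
The plan is to follow the template of the proof of Lemma~\ref{lem:ZB_propagation}, with one genuinely new ingredient: the affine constraint $\tilde A\xi=\tilde b$ defining the constrained zonotope is used to replace the free generator hyperball by a (generally strictly smaller) interval domain over which the remainder-form decomposition functions of Proposition~\ref{prop:Remainder_DF} are evaluated. To begin, by Definition~\ref{defn:CZ_ZB}(iii), $\mathcal{Z}=\{\tilde c+\tilde G\xi\mid \xi\in\mathcal{D}\}$ where $\mathcal{D}\triangleq\{\xi\in\mathbb{B}^{n_g}_\infty\mid \tilde A\xi=\tilde b\}$, so that, defining $\tilde f(\xi)\triangleq f(\tilde c+\tilde G\xi)$, we have $f(\mathcal{Z})=\tilde f(\mathcal{D})$.

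The key step is to show $\mathcal{D}\subseteq[\underline{\mathbf{l}}_{n_g},\overline{\mathbf{l}}_{n_g}]$. If $\tilde A\tilde A^\dagger\tilde b\neq\tilde b$ the system $\tilde A\xi=\tilde b$ is infeasible, $\mathcal{D}=\emptyset$, and the inclusion is vacuous; otherwise every solution can be written $\xi=\tilde A^\dagger\tilde b+(I_{n_g}-\tilde A^\dagger\tilde A)v$ for some $v$. Hence, for each coordinate $j$ with $(\mathbf{r}_{n_g})_j=(\mathrm{rowsupp}(I_{n_g}-\tilde A^\dagger\tilde A))_j=0$, the $j$-th row of $I_{n_g}-\tilde A^\dagger\tilde A$ is zero, so $\xi_j=(\tilde A^\dagger\tilde b)_j$ is forced, which together with $\xi_j\in[-1,1]$ gives $\xi_j\in[\max(-1,(\tilde A^\dagger\tilde b)_j),\min(1,(\tilde A^\dagger\tilde b)_j)]$; for $(\mathbf{r}_{n_g})_j=1$ only $\xi_j\in[-1,1]$ is retained, which is exactly what $\underline{\mathbf{l}}_{n_g,j},\overline{\mathbf{l}}_{n_g,j}$ reduce to once $\kappa$ is taken sufficiently large (formally $\kappa\to\infty$). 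Stacking these componentwise bounds yields $\mathcal{D}\subseteq[\underline{\mathbf{l}}_{n_g},\overline{\mathbf{l}}_{n_g}]$, where an empty interval on the right is harmless since then $\mathcal{D}=\emptyset$ as well.

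With this interval fixed, the rest parallels Lemma~\ref{lem:ZB_propagation}: for each $H\in\mathbf{H}_{\tilde f}$, Corollary~\ref{cor:supproting_vectors} applied with domain $[\underline{\mathbf{l}}_{n_g},\overline{\mathbf{l}}_{n_g}]$ decomposes $\tilde f(\xi)=\tilde g^H(\xi)+H\xi$ with $\tilde g^H$ Jacobian sign-stable, and supplies the componentwise tight bounds $\underline g^H_i\le\tilde g^H_i(\xi)\le\overline g^H_i$ of \eqref{eq:ov_g_cz}--\eqref{eq:und_g_cz}, valid for all $\xi\in[\underline{\mathbf{l}}_{n_g},\overline{\mathbf{l}}_{n_g}]\supseteq\mathcal{D}$. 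Therefore, for every $\xi\in\mathcal{D}$, $\tilde f(\xi)=\tilde g^H(\xi)+H\xi\in[\underline g^H,\overline g^H]\oplus H\mathcal{D}$, i.e., $f(\mathcal{Z})\subseteq[\underline g^H,\overline g^H]\oplus H\mathcal{D}$. I would then rewrite this Minkowski sum as a CZ: by Proposition~\ref{prop:interval_cen_rep}, $[\underline g^H,\overline g^H]=\tilde c^H\oplus\tfrac12\mathrm{diag}(\overline g^H-\underline g^H)\mathbb{B}^{n_x}_\infty$ with $\tilde c^H=\tfrac12(\overline g^H+\underline g^H)$, while $H\mathcal{D}=\{H\xi\mid \xi\in\mathbb{B}^{n_g}_\infty,\ \tilde A\xi=\tilde b\}$; concatenating the two generator matrices into $\tilde G^H=[H\ \ \tfrac12\mathrm{diag}(\overline g^H-\underline g^H)]$, adding centers, and stacking the constraints (the interval term contributing none) yields precisely $\{\tilde G^H,\tilde c^H,\mathbb{A},\tilde b\}_{CZ}$ with $\mathbb{A}=[\tilde A\ \ 0]$. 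Since this holds for every $H\in\mathbf{H}_{\tilde f}$ and $f(\mathcal{Z})$ lies in each such CZ, it lies in their intersection, which is \eqref{eq:CZ_prediction}.

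I expect the main obstacle to be exactly the second paragraph --- rigorously justifying the tighter generator box $[\underline{\mathbf{l}}_{n_g},\overline{\mathbf{l}}_{n_g}]$ --- because it is the only part not already handled by Lemma~\ref{lem:ZB_propagation}: one must reason carefully about the row support of $I_{n_g}-\tilde A^\dagger\tilde A$, the role of $\kappa$ as ``infinity'', the feasibility condition $\tilde A\tilde A^\dagger\tilde b=\tilde b$, and the degenerate case of an empty resulting interval. The downstream steps (the JSS decomposition, the remainder bounds, and the CZ bookkeeping of the Minkowski sum) are routine given Corollary~\ref{cor:supproting_vectors} and Propositions~\ref{prop:Remainder_DF} and~\ref{prop:interval_cen_rep}.
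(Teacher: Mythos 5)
Your proposal is correct and follows essentially the same route as the paper's proof: lift to the generator space, shrink the generator domain to the interval $[\underline{\mathbf{l}}_{n_g},\overline{\mathbf{l}}_{n_g}]$ using the constraint $\tilde A\xi=\tilde b$, decompose $\tilde f$ into a JSS part plus $H\xi$ via Corollary~\ref{cor:supproting_vectors}, bound the JSS part by the remainder-form decomposition functions, and assemble the Minkowski sum into the stated CZ before intersecting over $H$. The only difference is that you derive the generator-box enclosure from first principles via the pseudoinverse parametrization $\xi=\tilde A^\dagger\tilde b+(I_{n_g}-\tilde A^\dagger\tilde A)v$ (also handling infeasibility), whereas the paper simply cites \cite[Theorem 2]{james1978generalised} for that step.
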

\begin{proof}
To prove the inclusion in \eqref{eq:CZ_prediction}, consider the constrained zonotope representation of the set $\mathcal{Z}$, i.e., $\mathcal{Z} \triangleq \{\tilde{G},\tilde{c},\tilde{A},\tilde{b}\}_{CZ} \triangleq \{z=\tilde{G}\xi+c|\xi \in \mathbb{B}^{n_g}_{\infty}, \tilde{A}\xi=\tilde{b}\}$. Using similar notation as in the proof of Lemma \ref{lem:ZB_propagation}, let us define $\tilde{f}(\xi):\mathbb{B}^{n_g}_{\infty} \to \mathbb{R}^{n_x} \triangleq f(\tilde{G}\xi+\tilde{c})$ that consequently return
\begin{align}\label{eq:set_equiv_cz}
f({\mathcal{Z}}) \subseteq \{\tilde{f}(\xi)|\xi \in \mathbb{B}^{n_g}_{\infty},\tilde{A}\xi=\tilde{b} \} .
\end{align}
Note that by \cite[Theorem 2]{james1978generalised}, $\tilde{A}\xi=\tilde{b} \Rightarrow \xi \in \mathbb{I}{\Xi} \triangleq [\tilde{A}^\dagger\tilde{b}-\kappa \mathbf{r}_{n_g},\tilde{A}^\dagger\tilde{b}-\kappa \mathbf{r}_{n_g}]$, where $\mathbf{r}_{n_g} \triangleq \textstyle{\mathrm{rowsupp}}(I_{n_g}-\tilde{A}^\dagger \tilde{A})$ and $\kappa$ is a very large positive real number. This, in addition to the fact that $\xi \in \mathbb{B}^{n_g}_{\infty}$ (cf. \eqref{eq:set_equiv_cz}), imply that $\xi \in \mathbb{I}\tilde{\Xi}\triangleq \mathbb{I}{\Xi} \cap \mathbb{B}^{n_g}_{\infty}=[\underline{\mathbf{l}}_{n_g},\overline{\mathbf{l}}_{n_g}]$, where $\underline{\mathbf{l}}_{n_g},\overline{\mathbf{l}}_{n_g}$ are defined below \eqref{eq:und_g_cz}.   
On the other hand, similar to the proof of Lemma \ref{lem:ZB_propagation}, using Corollary \ref{cor:supproting_vectors} we conclude that $\forall H \in \mathbf{H}_{\tilde{f}}$, $\tilde{f}(\cdot)$ can be decomposed as 
\begin{align}\label{eq:decomposed_cz}
\begin{array}{rl}
\tilde{f}(\xi) &=\tilde{g}^H(\xi)+H\xi,  \ \ \ \forall H \in \mathbf{H}_{\tilde{f}},\forall \xi \in \mathbb{I}\tilde{\Xi}, \\
\Rightarrow \tilde{f}(\mathbb{I}\tilde{\Xi}) &\subseteq \tilde{g}^H(\mathbb{I}\tilde{\Xi})\oplus H\mathbb{I}\tilde{\Xi}, \forall H \in \mathbf{H}_{\tilde{f}},
\end{array}
\end{align}
where $\tilde{g}^H(\xi)$ is a JSS function in $\mathbb{I}\tilde{\Xi}$ and $\mathbf{H}_{\tilde{f}}$ is given in \eqref{eq:JSS_maker}. 
Also, by Corollary \ref{cor:supproting_vectors}, 
in each dimension $i \in \mathbb{N}_{n_x}$, $\tilde{g}^H_{i}(\xi)$ can be tightly bounded as $\underline {{g}}^H_{i}\leq \tilde{g}^H_{i}(\xi) \leq \overline {{g}}^H_{i}, \forall \xi \in \mathbb{I}\tilde{\Xi}, \forall H \in \mathbf{H}_{\tilde{f}} $, with $\overline{{g}}^H_{i},\underline{{g}}^H_{i}$ given in \eqref{eq:ov_g_cz} and \eqref{eq:und_g_cz}, respectively. Augmenting all these $\mathbb{N}_{nx}$ one-dimensional inequalities, as well as Proposition \ref{prop:interval_cen_rep} yield the following set inclusion: $\forall H \in \mathbf{H}_{\tilde{f}} $:
\begin{align*}
\tilde{g}^H(\mathbb{I}\tilde{\Xi})\hspace{-.1cm} \subseteq\hspace{-.1cm} [\underline {g}^H,\overline {g}^H]\hspace{-.1cm}=\hspace{-.1cm}\frac{1}{2}((\underline {g}^H+\overline {g}^H)\oplus \textstyle{\mathrm{diag}}(\overline {g}^H-\underline {g}^H)\mathbb{B}^{n_x}_{\infty}).  
\end{align*}
This, \eqref{eq:set_equiv_cz}, \eqref{eq:decomposed_cz} and the fact that the inclusion in \eqref{eq:decomposed_cz} holds for  all $H \in \mathbf{H}_{\tilde{f}}$ and hence for the intersection of all of them, return: $ \forall H \in \mathbf{H}_{\tilde{f}}: f({\mathcal{Z}}) \subseteq \{H\xi+\textstyle{\mathrm{diag}}(\overline {g}^H-\underline {g}^H)\theta+\frac{1}{2}((\underline {g}^H+\overline {g}^H)|\xi \in \mathbb{B}^{n_g}_{\infty},\theta \in \mathbb{B}^{n_x}_{\infty}, \tilde{A}\xi=\tilde{b}\}$, where the set on the right hand side of the inclusion is equivalent to the intersection of the CZs {on} the right hand side of \eqref{eq:CZ_prediction}.
\end{proof}
Finally, for further improvement, we can take the intersection of the resultant propagated sets in Lemmas \ref{lem:ZB_propagation} and \ref{lem:CZ_propagation}.   This, is formally summarized in the following Theorem \ref{thm:ZB_CZ_propagation}. 
\begin{thm}[Decomposition-Based ZB/CZ Propagation]\label{thm:ZB_CZ_propagation}
Suppose all the assumptions in Lemmas \ref{lem:ZB_propagation} and \ref{lem:CZ_propagation} hold. Then, $f(\mathcal{Z}) \subseteq \mathcal{ZB}_f \cap \mathcal{CZ}_f$, where $\mathcal{ZB}_f,\mathcal{CZ}_f$ are computed in Lemmas \ref{lem:ZB_propagation} and \ref{lem:CZ_propagation}, respectively.
\end{thm}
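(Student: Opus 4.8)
The plan is essentially to glue together the two inclusions that Lemmas~\ref{lem:ZB_propagation} and~\ref{lem:CZ_propagation} have already established, the only point of care being that both lemmas are invoked for one and the same set $\mathcal{Z}$ under two different encodings. By Assumption~\ref{ass:polytopic_uncertainty} together with the representation-equivalence remark in Definition~\ref{defn:CZ_ZB}, the set $\mathcal{Z}$ simultaneously admits a zonotope-bundle representation $\mathcal{Z}=\bigcap_{s=1}^{S}\{G_s,c_s\}_Z$ and a constrained-zonotope representation $\mathcal{Z}=\{\tilde{G},\tilde{c},\tilde{A},\tilde{b}\}_{CZ}$, and the image $f(\mathcal{Z})=\{f(z)\mid z\in\mathcal{Z}\}$ depends only on $\mathcal{Z}$ as a subset of $\mathbb{R}^{n_z}$, not on the chosen parametrization.

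First I would apply Lemma~\ref{lem:ZB_propagation} to the ZB representation of $\mathcal{Z}$, obtaining $f(\mathcal{Z})\subseteq\mathcal{ZB}_f$ with $\mathcal{ZB}_f$ as in~\eqref{eq:ZB_prediction}. Next I would apply Lemma~\ref{lem:CZ_propagation} to the CZ representation of the same $\mathcal{Z}$, obtaining $f(\mathcal{Z})\subseteq\mathcal{CZ}_f$ with $\mathcal{CZ}_f$ as in~\eqref{eq:CZ_prediction}. Since both inclusions share the identical left-hand side $f(\mathcal{Z})$, that set is contained in the intersection of the two right-hand sides, i.e., $f(\mathcal{Z})\subseteq\mathcal{ZB}_f\cap\mathcal{CZ}_f$, which is exactly the assertion of Theorem~\ref{thm:ZB_CZ_propagation}.

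For completeness I would add a short remark that the output type is preserved, so that the theorem statement is well posed: $\mathcal{ZB}_f$ is a finite intersection of zonotopes and hence a ZB, $\mathcal{CZ}_f$ is a finite intersection of CZs and hence a CZ, each zonotope in $\mathcal{ZB}_f$ can be rewritten as a CZ, and the intersection of finitely many constrained zonotopes is again a constrained zonotope via the standard generator-lifting/halfspace construction implemented in CORA2020~\cite{althoff2020cora}; equivalently it can be kept in ZB form. Hence $\mathcal{ZB}_f\cap\mathcal{CZ}_f$ is still a polytope $\equiv$ CZ $\equiv$ ZB in the sense of Definition~\ref{defn:CZ_ZB}.

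There is no real obstacle in this proof; the statement is a direct corollary of the two preceding lemmas. The single subtlety worth making explicit — and the one I would be careful about — is that Lemmas~\ref{lem:ZB_propagation} and~\ref{lem:CZ_propagation} are being instantiated on two distinct encodings of the same underlying set, so one must observe that $f(\mathcal{Z})$ is representation-independent before intersecting the two enclosures; everything else is immediate from $A\subseteq B$ and $A\subseteq C$ implying $A\subseteq B\cap C$.
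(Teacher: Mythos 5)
Your proposal is correct and matches the paper's own proof, which likewise just combines $f(\mathcal{Z})\subseteq\mathcal{ZB}_f$ and $f(\mathcal{Z})\subseteq\mathcal{CZ}_f$ from Lemmas~\ref{lem:ZB_propagation} and~\ref{lem:CZ_propagation} into $f(\mathcal{Z})\subseteq\mathcal{ZB}_f\cap\mathcal{CZ}_f$. Your added remarks on representation-independence of $f(\mathcal{Z})$ and closure of the set class under intersection are sensible but not part of the paper's (one-line) argument.
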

\begin{proof}
It follows from Lemmas \ref{lem:ZB_propagation} and \ref{lem:CZ_propagation} that $f(\mathcal{Z}) \subseteq \mathcal{ZB}_f $ and $f(\mathcal{Z}) \subseteq \mathcal{CZ}_f$, and so $f(\mathcal{Z}) \subseteq \mathcal{ZB}_f \cap \mathcal{CZ}_f$.
\end{proof}
\subsection{Decomposition-Based CZ/ZB update} \label{sec:update}
In this section, we address Problem \ref{prob:update}, given a locally Lipschitz nonlinear vector field $\mu(\cdot)$ and assuming that the initial propagated and the observation/constraint sets are zonotope bundles (Lemma \ref{lem:ZB_update}) or constraint zonotopes (Lemma \ref{lem:CZ_update}). Using similar idea as in Section \ref{sec:propagation}, i.e, lifting to the space of generators, decomposing the transformed observation function into a JSS and a linear component, applying the tight remainder-form decomposition functions \cite{khajenejad2021tightremainder} to bound the JSS component, augmenting and intersecting, as well as taking the advantage of linear remainder functions, we obtain ZB/CZ enclosures to the nonlinear generalized intersection in \eqref{eq:update}. The results of this section are summarized in Lemmas \ref{lem:ZB_update} and \ref{lem:CZ_update} and Theorem \ref{thm:ZB_CZ_update}. 

\begin{lem}[Decomposition-Based ZB Update]\label{lem:ZB_update}
Let $\mu: \mathbb{R}^{n_x}  \to \mathbb{R}^{n_{\mu}}$ satisfies Assumption \ref{assumption:mix-lip}. Let $\mathcal{Z}_f \subset \mathbb{R}^{n_{x}}$ and $Z_{\mu} \subset \mathbb{R}^{n_{\mu}}$ be two ZB sets, i.e., 
$\mathcal{Z}_f=\mathcal{ZB}_f=\bigcap\limits_{r=1}^{R} \{G_f^{r},c_f^{r}\}_Z$ and $\mathcal{Z}_{\mu}=\mathcal{ZB}_{\mu}=\bigcap\limits_{t=1}^{T} \{G_{\mu}^{t},c_{\mu}^{t}\}_Z$,  
and $\forall r \in \mathbb{N}_{R} \triangleq \{1,\dots,R\},\forall t \in \mathbb{N}_{T} \triangleq \{1,\dots,T\}$, $n_r,n_t$ be the number of generators of the corresponding zonotopes, respectively. 
Then, the following set inclusion holds:
\begin{align} \label{eq:ZB_update}
\mathcal{ZB}_f \cap_{\mu} \mathcal{ZB}_{\mu} \hspace{-.1cm}\subseteq \hspace{-.1cm} \mathcal{ZB}_u \hspace{-.1cm} \triangleq \hspace{-.1cm}\bigcap\limits_{r=1}^R\bigcap\limits_{t=1}^T\bigcap\limits_{Q_r \in \mathbf{Q}_{\tilde{\mu}_r}} \{\hat{G}^t_{r},\hat{c}_{r},\hat{A}^{Q_r}_{r,t},\hat{b}^{Q_r}_{r,t}\}_{CZ}, 
\end{align} 
where $ \hat{G}^t_{r} \triangleq [{G}^r_f \ \mathbf{0}^t],\hat{c}_r \triangleq c^r_f, \ \hat{b}^{Q_r}_{r,t} \triangleq c^t_{\mu}-\frac{1}{2}(\overline{p}^{Q_r}_r+\underline{p}^{Q_r}_r)$, 
\begin{align}
\nonumber \hat{A}^{Q_r}_{r,t} &\triangleq \begin{bmatrix} Q_r & -G^t_{\mu} & \frac{1}{2}\textstyle{\mathrm{diag}}(\overline{p}^{Q_r}_r-\underline{p}^{Q_r}_r) \end{bmatrix}, \\
 \overline{p}^{Q_r}_{r,i} &\triangleq  p^{r}_{i,d}(\mathbf{1}_{n_{r}},-\mathbf{1}_{n_{r}};Q^\top_{r(i,:)},\langle \label{eq:over_p_zb} Q^\top_{r(i,:)},\cdot\rangle), \\
 \underline{p}^{Q_r}_{r,i} &\triangleq p^{r}_{i,d}(-\mathbf{1}_{n_{r}},\mathbf{1}_{n_{r}};Q^\top_{r(i,:)},\langle Q^\top_{r(i,:)},\cdot\rangle),\label{eq:under_p_zb}
\end{align}
 $p^{r}_{i,d}(\cdot,\cdot;Q_r,\langle Q^\top_{r(i,:)},\cdot\rangle)$ is the tight mixed-monotone decomposition function (cf. Proposition \ref{prop:Remainder_DF}) for the JSS mapping $p^{Q_r}_{r,i}(\alpha) \triangleq \tilde{\mu}_{r,i}(\alpha)-\langle Q^\top_{r(i,:)},\alpha\rangle:\mathbb{B}^{n_r}_{\infty} \to \mathbb{R}^{n_{\mu}}$, $\mathbf{Q}_{\tilde{\mu}_r}$ is defined similar to $\mathbf{H}_{f}$ in Corollary \ref{cor:supproting_vectors} (with the corresponding function being $\tilde{\mu}_r(\alpha) \triangleq \mu(c^r_{f}+G^r_{f}\alpha)$) and $\mathbf{0}^t$ is a zero matrix in $\mathbb{R}^{n_{x} \times (n_{t}+n_{\mu})}$.
\end{lem}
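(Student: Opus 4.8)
The plan is to follow the template of the propagation proofs (Lemmas~\ref{lem:ZB_propagation} and~\ref{lem:CZ_propagation}): lift the two membership conditions to the generator space, split the transformed observation map into a Jacobian sign-stable (JSS) part plus a linear remainder via Corollary~\ref{cor:supproting_vectors}, bound the JSS part tightly and componentwise, and then read off the claimed constrained zonotope from the resulting linear equality. The new feature relative to propagation is that one must encode \emph{two} zonotopic conditions at once, namely $x \in \mathcal{ZB}_f$ and $\mu(x) \in \mathcal{ZB}_\mu$, and couple them through the decomposition of $\tilde{\mu}_r$.

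First I would fix an arbitrary $x \in \mathcal{ZB}_f \cap_\mu \mathcal{ZB}_\mu$, so that $x \in \{G^r_f,c^r_f\}_Z$ for every $r \in \mathbb{N}_R$ and $\mu(x) \in \{G^t_\mu,c^t_\mu\}_Z$ for every $t \in \mathbb{N}_T$. Fixing $r$, there is $\alpha \in \mathbb{B}^{n_r}_\infty$ with $x = c^r_f + G^r_f\alpha$, hence $\mu(x) = \tilde{\mu}_r(\alpha)$. By Corollary~\ref{cor:supproting_vectors}, for any $Q_r \in \mathbf{Q}_{\tilde{\mu}_r}$ the map $p^{Q_r}_r(\alpha) \triangleq \tilde{\mu}_r(\alpha) - Q_r\alpha$ is JSS on $\mathbb{B}^{n_r}_\infty$, so $\tilde{\mu}_r(\alpha) = p^{Q_r}_r(\alpha) + Q_r\alpha$ and, componentwise, $\underline{p}^{Q_r}_{r,i} \leq p^{Q_r}_{r,i}(\alpha) \leq \overline{p}^{Q_r}_{r,i}$ with the tight bounds of \eqref{eq:over_p_zb}--\eqref{eq:under_p_zb} (tightness inherited from Proposition~\ref{prop:Remainder_DF}). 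Proposition~\ref{prop:interval_cen_rep} then supplies $\theta \in \mathbb{B}^{n_\mu}_\infty$ with $p^{Q_r}_r(\alpha) = \frac{1}{2}(\overline{p}^{Q_r}_r + \underline{p}^{Q_r}_r) + \frac{1}{2}\mathrm{diag}(\overline{p}^{Q_r}_r - \underline{p}^{Q_r}_r)\theta$.

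Next, for each $t$ I would pick $\beta_t \in \mathbb{B}^{n_t}_\infty$ with $\mu(x) = c^t_\mu + G^t_\mu\beta_t$. Equating the two representations of $\mu(x)$ and collecting the generator-dependent terms gives precisely $\hat{A}^{Q_r}_{r,t}\,[\alpha^\top\ \beta_t^\top\ \theta^\top]^\top = \hat{b}^{Q_r}_{r,t}$ with $\hat{A}^{Q_r}_{r,t}$, $\hat{b}^{Q_r}_{r,t}$ as in the statement, while $x = \hat{c}_r + \hat{G}^t_r\,[\alpha^\top\ \beta_t^\top\ \theta^\top]^\top$ since $\hat{G}^t_r = [G^r_f\ \mathbf{0}^t]$ and $\hat{c}_r = c^r_f$. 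Because $[\alpha^\top\ \beta_t^\top\ \theta^\top]^\top \in \mathbb{B}^{n_r+n_t+n_\mu}_\infty$, this certifies $x \in \{\hat{G}^t_r,\hat{c}_r,\hat{A}^{Q_r}_{r,t},\hat{b}^{Q_r}_{r,t}\}_{CZ}$; since $r$, $t$ and $Q_r \in \mathbf{Q}_{\tilde{\mu}_r}$ were arbitrary, $x$ belongs to the intersection, which is the set $\mathcal{ZB}_u$ in \eqref{eq:ZB_update}.

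The hard part will be purely the bookkeeping: tracking which block of the augmented generator vector does what ($\alpha$ realizes $x$ in the $r$-th zonotope of $\mathcal{ZB}_f$, $\beta_t$ realizes $\mu(x)$ in the $t$-th zonotope of $\mathcal{ZB}_\mu$, and $\theta$ absorbs the interval enclosure of the JSS remainder), checking that matching $\tilde{\mu}_r(\alpha)$ against $c^t_\mu + G^t_\mu\beta_t$ reproduces $\hat{A}^{Q_r}_{r,t}$ and $\hat{b}^{Q_r}_{r,t}$ verbatim, signs included, and confirming that all three blocks lie in the unit $\infty$-ball so the assembled object is genuinely a CZ. No analytic estimate beyond Corollary~\ref{cor:supproting_vectors} is needed.
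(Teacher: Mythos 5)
Your proposal is correct and follows essentially the same route as the paper's own proof: fix a point in the generalized intersection, lift to the generator space of the $r$-th zonotope of $\mathcal{ZB}_f$, decompose $\tilde{\mu}_r$ into a JSS part plus $Q_r\alpha$ via Corollary~\ref{cor:supproting_vectors}, absorb the tight interval bound on the JSS part into an extra generator $\theta\in\mathbb{B}^{n_\mu}_\infty$ via Proposition~\ref{prop:interval_cen_rep}, and equate against the $t$-th zonotope of $\mathcal{ZB}_\mu$ to obtain exactly the constraint $\hat{A}^{Q_r}_{r,t}$, $\hat{b}^{Q_r}_{r,t}$. Your write-up is in fact slightly more explicit than the paper's about the block structure of the augmented generator vector $[\alpha^\top\ \beta_t^\top\ \theta^\top]^\top$, but there is no substantive difference in approach.
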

\begin{proof}
Suppose $z \in \mathcal{ZB}_f \cap_{\mu} \mathcal{ZB}_{\mu}$. Then by definition of the operator $\cap_{\mu}$ (cf. \eqref{eq:update}), $z \in \mathcal{ZB}_f$ and $\mu(z) \in \mathcal{ZB}_{\mu}$. The former implies that $\forall r \in \mathbb{N}_{R}, \exists \alpha \in \mathbb{B}^{n_{r}}_{\infty}$ such that $z=G^r_f\alpha+c^r_f$, while it follows from the latter that $\mu(z)=\mu(G^r_f\alpha+c^r_f) \triangleq \tilde {\mu}_r(\alpha) \in \mathcal{ZB}_{\mu} \Rightarrow \forall t \in \mathbb{N}_T, \exists \zeta \in \mathbb{B}^{n_t}_{\infty}$, such that $\tilde{\mu}_r(\alpha)=c^t_{\mu}+G^t_{\mu}\zeta$.
Putting these two results in a set representation form, we obtain:
\begin{align} \label{eq:int_cz}
z \hspace{-.1cm}\in \hspace{-.1cm} \bigcap\limits_{r=1}^R\hspace{-.1cm}\bigcap\limits_{t=1}^T\{G^r_f\alpha\hspace{-.1cm}+\hspace{-.1cm}c^r_f|\tilde{\mu}_r(\alpha)\hspace{-.1cm}=\hspace{-.1cm}c^t_{\mu}\hspace{-.1cm}+\hspace{-.1cm}G^t_{\mu}\zeta,\alpha \hspace{-.1cm}\in\hspace{-.1cm} \mathbb{B}^{n_{r}}_{\infty},\zeta \hspace{-.1cm}\in\hspace{-.1cm} \mathbb{B}^{n_t}_{\infty} \}   
\end{align}
On the other hand, using Corollary \ref{cor:supproting_vectors}, $\tilde{\mu}_r(\cdot)$ can be decomposed into a JSS and a linear mapping as follows: $\forall r \in \mathbb{N}_R,\forall Q_r \in \mathbf{Q}_{\tilde{\mu}_r}, \forall \alpha \in \mathbb{B}^{n_{r}}_{\infty}$:
\begin{align}\label{eq:dec_zb_update}
\tilde{\mu}_r(\alpha)=p^{Q_r}_r(\alpha)+Q_r\alpha.
\end{align}
Moreover, by the same corollary, the JSS component $p^{Q_r}_r(\cdot)$ is tightly bounded as follows: $\forall i \in \mathbb{N}_{n_{\mu}},\forall Q_r \in \mathbf{Q}_{\tilde{\mu}_r}, \underline{p}^{Q_r}_{r,i} \leq p^{Q_r}_{r.i}(\alpha) \leq  \underline{p}^{Q_r}_{r,i}, \forall \alpha \in \mathbb{B}^{n_{r}}_{\infty}$, with $\underline{p}^{Q_r}_{r,i},\overline{p}^{Q_r}_{r,i}$ given in \eqref{eq:over_p_zb} and \eqref{eq:under_p_zb}, respectively. This, as well as \eqref{eq:dec_zb_update} and Proposition \ref{prop:interval_cen_rep} imply $\forall r \in \mathbb{N}_R,\forall Q_r \in \mathbf{Q}_{\tilde{\mu}_r}, \forall \alpha \in \mathbb{B}^{n_{r}}_{\infty}, \exists \theta \in \mathbb{B}^{n_{\mu}}_{\infty}$ such that $\tilde{\mu}_r(\alpha)=\frac{1}{2}(\underline{p}^{Q_r}_{r,i}+\overline{p}^{Q_r}_{r,i})+\frac{1}{2}\textstyle{\mathrm{diag}}(\overline{p}^{Q_r}_{r,i}-\underline{p}^{Q_r}_{r,i})\theta+Q_r\alpha$, that along with \eqref{eq:int_cz} returns $z \in  \bigcap\limits_{r=1}^R\bigcap\limits_{t=1}^T\bigcap\limits_{Q_t \in \mathbf{Q}_{\tilde{\mu}_t}} \{G^r_f\alpha+c^r_f|\frac{1}{2}(\underline{p}^{Q_r}_{r,i}+\overline{p}^{Q_r}_{r,i})+\frac{1}{2}\textstyle{\mathrm{diag}}(\overline{p}^{Q_r}_{r,i}-\underline{p}^{Q_r}_{r,i})\theta+Q_r\alpha=c^t_{\mu}+G^t_{\mu}\zeta,\alpha \in \mathbb{B}^{n_{r}}_{\infty},\zeta \in \mathbb{B}^{n_t}_{\infty}, \theta \in \mathbb{B}^{n_{\mu}}_{\infty} \}  $, where the set on the right hand side is equivalent to the one on the right hand side of \eqref{eq:ZB_update}.
\end{proof}
\begin{lem}[Decomposition-Based CZ Update]\label{lem:CZ_update}
Let $\mu: \mathbb{R}^{n_x}  \to \mathbb{R}^{n_{\mu}}$ satisfies Assumption \ref{assumption:mix-lip}. Let $\mathcal{Z}_f \subset \mathbb{R}^{n_{x}}$ and $\mathcal{Z}_{\mu} \subset \mathbb{R}^{n_{\mu}}$ be two CZ sets, i.e., 
$\mathcal{Z}_f=\mathcal{CZ}_f= \{\tilde{G}_f,\tilde{c}_f,\tilde{A}_f,\tilde{b}_f\}_{CZ}$ and $\mathcal{Z}_{\mu}=\mathcal{CZ}_{\mu}= \{\tilde{G}_{\mu},\tilde{c}_{\mu},\tilde{A}_{\mu},\tilde{b}_{\mu}\}_{CZ}$,  
and $n_c,n_{\tau}$ be the number of generators of $\mathcal{Z}_f,\mathcal{Z}_{\mu}$, respectively. Then, the following set inclusion holds:
\begin{align} \label{eq:CZ_update}
\mathcal{CZ}_f \cap_{\mu} \mathcal{CZ}_{\mu} \subseteq \mathcal{CZ}_u \triangleq \bigcap\limits_{\Omega \in \mathbf{\Omega}_{{\lambda}}} \{\mathbb{G},\tilde{c}_{f},\mathbb{A}_{\Omega},\tilde{b}_{\Omega}\}_{CZ}, 
\end{align} 
where $\mathbb{G} \triangleq [\tilde{G}_f \ {0} \ 0], \ \tilde{b}_{\Omega} \triangleq [\tilde{b}^\top_f \ \tilde{b}^\top_{\mu} \ (\tilde{c}_f-\frac{1}{2}(\overline{\nu}^{\Omega}+\underline{\nu}^{\Omega}))^\top]^\top$,
\begin{align}
\nonumber \mathbb{A}_{\Omega} &\triangleq \begin{bmatrix} \tilde{A}_f & 0 & 0 \\ 0 & \tilde{A}_{\mu} & \\ \Omega & -\tilde{G}_{\mu} & \frac{1}{2}\textstyle{\mathrm{diag}}(\overline{\nu}^{\Omega}-\underline{\nu}^{\Omega}) \end{bmatrix}, \\
 \overline{\nu}^{\Omega}_{i} &\triangleq  {\nu}_{i,d}(\overline{\mathbf{l}}_{n_{c}},\underline{\mathbf{l}}_{n_{c}};\Omega^\top_{(i,:)},\langle \label{eq:over_p_cz} \Omega^\top_{(i,:)},\cdot\rangle), \\
 \underline{\nu}^{\Omega}_{i} &\triangleq {\nu}_{i,d}(\underline{\mathbf{l}}_{n_{c}},\overline{\mathbf{l}}_{n_{c}};\Omega^\top_{(i,:)},\langle \Omega^\top_{(i,:)},\cdot\rangle),\label{eq:under_p_cz}\\
 \nonumber \overline{\mathbf{l}}_{n_{c}} & \triangleq \hspace{-.1cm} \min(\mathbf{1}_{n_c},\tilde{A}_f^\dagger\tilde{b}_f+\kappa \mathbf{r}_{n_c}),\underline{\mathbf{l}}_{n_{c}} \hspace{-.15cm} \triangleq \hspace{-.1cm} \max(-\mathbf{1}_{n_c},\tilde{A}_f^\dagger\tilde{b}_f\hspace{-.1cm}-\hspace{-.1cm}\kappa \mathbf{r}_{n_c}\hspace{-.05cm})
\end{align}
 ${\nu}_{i,d}(\cdot,\cdot;{\Omega}^\top_{(i,:)},\langle \Omega^\top_{(i,:)},\cdot\rangle)$ is the tight mixed-monotone decomposition function (cf. Proposition \ref{prop:Remainder_DF}) for the JSS mapping $\nu^{\Omega}_{i}(\beta) \triangleq {\lambda}_{i}(\beta)-\langle \Omega^\top_{(i,:)},\beta\rangle:\mathbb{B}^{n_c}_{\infty} \to \mathbb{R}^{n_{\mu}}$, $\mathbf{\Omega}_{{\lambda}}$ is defined similar to $\mathbf{H}_{f}$ in Corollary \ref{cor:supproting_vectors} (with the corresponding function being ${\lambda}(\beta) \triangleq \mu(\tilde{c}_{f}+\tilde{G}_{f}\beta)$), $\mathbf{r}_{n_c} \triangleq \textstyle{\mathrm{rowsupp}}(I_{n_c}-\tilde{A}_f^\dagger \tilde{A}_f)$ and $\kappa$ is a very large positive real number (infinity).
\end{lem}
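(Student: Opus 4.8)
The plan is to replay the argument of Lemma~\ref{lem:CZ_propagation} — lifting the constrained zonotope to its generator box via the generalized‑inverse trick and splitting the lifted map into a JSS part plus a linear remainder — while simultaneously carrying along the observation‑consistency constraint as in the proof of Lemma~\ref{lem:ZB_update}. First I would fix an arbitrary $z \in \mathcal{CZ}_f \cap_\mu \mathcal{CZ}_\mu$; by the definition of $\cap_\mu$ in \eqref{eq:update} this means $z \in \mathcal{CZ}_f$ and $\mu(z) \in \mathcal{CZ}_\mu$. Writing both memberships in generator form, there exist $\beta \in \mathbb{B}^{n_c}_\infty$ with $\tilde{A}_f\beta = \tilde{b}_f$ and $z = \tilde{G}_f\beta + \tilde{c}_f$, and $\eta \in \mathbb{B}^{n_\tau}_\infty$ with $\tilde{A}_\mu\eta = \tilde{b}_\mu$ and $\mu(z) = \tilde{G}_\mu\eta + \tilde{c}_\mu$. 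Substituting the first into the second and using $\mu(z) = \mu(\tilde{c}_f + \tilde{G}_f\beta) = \lambda(\beta)$ turns the consistency requirement into $\lambda(\beta) = \tilde{G}_\mu\eta + \tilde{c}_\mu$.

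Next I would box the generator $\beta$. Exactly as in the proof of Lemma~\ref{lem:CZ_propagation}, \cite[Theorem 2]{james1978generalised} gives $\tilde{A}_f\beta = \tilde{b}_f \Rightarrow \beta \in [\tilde{A}_f^\dagger\tilde{b}_f - \kappa\mathbf{r}_{n_c},\, \tilde{A}_f^\dagger\tilde{b}_f + \kappa\mathbf{r}_{n_c}]$ with $\mathbf{r}_{n_c} = \mathrm{rowsupp}(I_{n_c} - \tilde{A}_f^\dagger\tilde{A}_f)$, and intersecting with $\beta \in \mathbb{B}^{n_c}_\infty$ yields $\beta \in \mathbb{I}\tilde{\Xi} \triangleq [\underline{\mathbf{l}}_{n_c}, \overline{\mathbf{l}}_{n_c}]$. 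On this box, $\lambda(\cdot) = \mu(\tilde{c}_f + \tilde{G}_f\,\cdot)$ is locally Lipschitz as a composition of $\mu$ with an affine map and, by Assumption~\ref{assumption:mix-lip} and the chain rule through $\tilde{G}_f$, has bounded Jacobian entries, so $\mathbf{\Omega}_\lambda$ is well defined and Corollary~\ref{cor:supproting_vectors} applies: for every $\Omega \in \mathbf{\Omega}_\lambda$ we get $\lambda(\beta) = \nu^\Omega(\beta) + \Omega\beta$ with $\nu^\Omega$ JSS on $\mathbb{I}\tilde{\Xi}$, together with the tight componentwise bounds $\underline{\nu}^\Omega_i \le \nu^\Omega_i(\beta) \le \overline{\nu}^\Omega_i$ of \eqref{eq:over_p_cz}--\eqref{eq:under_p_cz} for all $\beta \in \mathbb{I}\tilde{\Xi}$. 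By Proposition~\ref{prop:interval_cen_rep}, these bounds state that $\nu^\Omega(\beta) \in \frac{1}{2}(\overline{\nu}^\Omega + \underline{\nu}^\Omega) \oplus \frac{1}{2}\mathrm{diag}(\overline{\nu}^\Omega - \underline{\nu}^\Omega)\mathbb{B}^{n_\mu}_\infty$, i.e. there is $\gamma \in \mathbb{B}^{n_\mu}_\infty$ with $\nu^\Omega(\beta) = \frac{1}{2}(\overline{\nu}^\Omega + \underline{\nu}^\Omega) + \frac{1}{2}\mathrm{diag}(\overline{\nu}^\Omega - \underline{\nu}^\Omega)\gamma$.

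Finally I would assemble the constrained zonotope in the enlarged generator vector $[\beta^\top\ \eta^\top\ \gamma^\top]^\top$. Substituting the centered‑box form of $\nu^\Omega(\beta)$ into $\nu^\Omega(\beta) + \Omega\beta = \tilde{G}_\mu\eta + \tilde{c}_\mu$ produces one affine constraint on $(\beta,\eta,\gamma)$; stacking it with $\tilde{A}_f\beta = \tilde{b}_f$ and $\tilde{A}_\mu\eta = \tilde{b}_\mu$ reproduces exactly the constraint pair $(\mathbb{A}_\Omega, \tilde{b}_\Omega)$, while $z = \tilde{G}_f\beta + \tilde{c}_f = \mathbb{G}\,[\beta^\top\ \eta^\top\ \gamma^\top]^\top + \tilde{c}_f$ recovers the generator matrix $\mathbb{G}$ and the center $\tilde{c}_f$; hence $z \in \{\mathbb{G}, \tilde{c}_f, \mathbb{A}_\Omega, \tilde{b}_\Omega\}_{CZ}$. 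Since this holds for every $\Omega \in \mathbf{\Omega}_\lambda$, $z$ lies in the intersection $\mathcal{CZ}_u$ of \eqref{eq:CZ_update}, which proves the claim. I expect the main obstacle to be bookkeeping rather than any single inequality: one must verify that $\mathbb{I}\tilde{\Xi}$ is precisely the domain on which Corollary~\ref{cor:supproting_vectors} is invoked (so every feasible $\beta$ lies in it and the JSS decomposition is valid there), that $\mathbf{\Omega}_\lambda$ is built from legitimate Jacobian bounds of $\lambda$ over $\mathbb{I}\tilde{\Xi}$, and that the three generator blocks $(\beta,\eta,\gamma)$ align dimensionally with the block structure of $\mathbb{A}_\Omega$ and $\tilde{b}_\Omega$; modulo this, everything reduces to the matrix manipulations already carried out in Lemmas~\ref{lem:CZ_propagation} and \ref{lem:ZB_update}.
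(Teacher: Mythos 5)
Your proposal is correct and follows essentially the same route as the paper's own proof: lift to the generator space, box $\beta$ via the pseudoinverse result of \cite[Theorem 2]{james1978generalised} intersected with $\mathbb{B}^{n_c}_{\infty}$, decompose $\lambda$ into a JSS part plus $\Omega\beta$ via Corollary \ref{cor:supproting_vectors}, convert the tight bounds on $\nu^{\Omega}$ to a centered form with an auxiliary generator via Proposition \ref{prop:interval_cen_rep}, and stack the resulting affine constraint with $\tilde{A}_f\beta=\tilde{b}_f$ and $\tilde{A}_{\mu}\gamma=\tilde{b}_{\mu}$ to recover $(\mathbb{A}_{\Omega},\tilde{b}_{\Omega})$. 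The only differences are notational (your $\eta,\gamma$ versus the paper's $\gamma,\rho$), so no further comparison is needed.
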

\begin{proof}
Suppose $z \in \mathcal{CZ}_f \cap_{\mu} \mathcal{CZ}_{\mu}$. Then by definition of the operator $\cap_{\mu}$ (cf. \eqref{eq:update}), $z \in \mathcal{CZ}_f$ and $\mu(z) \in \mathcal{CZ}_{\mu}$. The former implies that $ \exists \beta \in \mathbb{B}^{n_{c}}_{\infty}$ such that $\tilde{A}_f \beta=\tilde{b}_f \land z=\tilde{G}_f\beta+\tilde{c}_f$, while it follows from the latter that $\mu(z)=\mu(\tilde{G}_f\beta+\tilde{c}_f) \triangleq  {\lambda}(\beta) \in \mathcal{CZ}_{\mu} \Rightarrow , \exists \gamma \in \mathbb{B}^{n_{\tau}}_{\infty}$, such that $\tilde{A}_{\mu}\gamma=\tilde{b}_{\mu} \land {\lambda}(\beta)=\tilde{c}_{\mu}+\tilde{G}_{\mu}\gamma$.
Putting these two results into a set representation form, we obtain:
{\small
\begin{align} \label{eq:int_cz2}
z \hspace{-.1cm}\in \hspace{-.1cm} \{\tilde{G}_f\beta\hspace{-.1cm}+\hspace{-.1cm}\tilde{c}_f|\lambda(\beta)\hspace{-.1cm}=\hspace{-.1cm}\tilde{c}_{\mu}\hspace{-.1cm}+\hspace{-.1cm}\tilde{G}_{\mu}\gamma,\hspace{-.1cm} \tilde{A}_f \beta\hspace{-.1cm}=\hspace{-.1cm}\tilde{b}_f,\tilde{A}_{\mu}\gamma\hspace{-.1cm}=\hspace{-.1cm}\tilde{b}_{\mu},
\beta \hspace{-.1cm}\in\hspace{-.1cm} \mathbb{B}^{n_{c}}_{\infty},\gamma \hspace{-.1cm}\in\hspace{-.1cm} \mathbb{B}^{n_{\tau}}_{\infty} \} 
\end{align}
}
On the other hand, using Corollary \ref{cor:supproting_vectors}, $\lambda(\cdot)$ can be decomposed into a JSS and a linear mapping as follows: 
\begin{align}\label{eq:dec_cz_update}
\forall \Omega \in \mathbf{\Omega}_{\lambda}, \forall \beta \in \mathbb{B}^{n_{c}}_{\infty}:\lambda(\beta)={\nu}^{\Omega}(\beta)+\Omega\beta.
\end{align}
Further, note that by \cite[Theorem 2]{james1978generalised}, $\tilde{A}_f\beta=\tilde{b}_f \Rightarrow \beta \in \mathbb{IB} \triangleq [\tilde{A}_f^\dagger\tilde{b}_f-\kappa \mathbf{r}_{n_c},\tilde{A}_f^\dagger\tilde{b}_f-\kappa \mathbf{r}_{n_c}]$, where $\mathbf{r}_{n_c} \triangleq \textstyle{\mathrm{rowsupp}}(I_{n_c}-\tilde{A}_f^\dagger \tilde{A}_f)$ and $\kappa$ is a very large positive real number, which given that $\beta \in \mathbb{B}^{n_c}_{\infty}$, results is $\beta \in \mathbb{IB} \cap \mathbb{B}^{n_c}_{\infty}=[\underline{\mathbf{l}}_{n_{c}},\overline{\mathbf{l}}_{n_{c}}]$. This and Corollary \ref{cor:supproting_vectors} imply that the JSS component ${\nu}^{\Omega}(\cdot)$ is tightly bounded as follows: $\forall i \in \mathbb{N}_{n_{\mu}},\forall \Omega \in \mathbf{\Omega}_{\lambda}, \underline{\nu}^{\Omega}_{i} \leq {\nu}^{\Omega}_{i}(\beta) \leq  \underline{\nu}^{\Omega}_{i}, \forall \beta \in \mathbb{B}^{n_{c}}_{\infty}$, with $\underline{\nu}^{\Omega}_{i},\overline{\nu}^{\Omega}_{i}$ given in \eqref{eq:over_p_cz} and \eqref{eq:under_p_cz}, respectively. This, \eqref{eq:dec_cz_update} and Proposition \ref{prop:interval_cen_rep} imply $\forall \Omega \in \mathbf{\Omega}_{\lambda}, \forall \beta \in \mathbb{B}^{n_{c}}_{\infty}, \exists \rho \in \mathbb{B}^{n_{\mu}}_{\infty}$ such that $\lambda(\beta)=\frac{1}{2}(\underline{\nu}^{\Omega}+\overline{\nu}^{\Omega})+\frac{1}{2}\textstyle{\mathrm{diag}}(\overline{\nu}^{\Omega}-\underline{\nu}^{\Omega})\rho+\Omega\alpha$, which along with \eqref{eq:int_cz2} returns $z \in  \bigcap\limits_{\Omega \in \mathbf{\Omega}_{\lambda}} \{\tilde{G}_f\beta+\tilde{c}_f|\frac{1}{2}(\underline{\nu}^{\Omega}+\overline{\nu}^{\Omega})+\frac{1}{2}\textstyle{\mathrm{diag}}(\overline{\nu}^{\Omega}-\underline{\nu}^{\Omega})\rho+\Omega\beta=\tilde{c}_{\mu}+\tilde{G}_{\mu}\gamma,\tilde{A}_f\beta=\tilde{b}_f,\tilde{A}_{\mu}\gamma=\tilde{b}_{\mu},\beta \in \mathbb{B}^{n_{c}}_{\infty},\gamma \in \mathbb{B}^{n_{\tau}}_{\infty}, \rho \in \mathbb{B}^{n_{\mu}}_{\infty} \}  $, where the set on the right hand side is equivalent to the one on the right hand side of \eqref{eq:ZB_update}. 
\end{proof}\vspace{-.2cm}
We conclude this subsection by combining the results in Lemmas \ref{lem:ZB_update} and \ref{lem:CZ_update} via the following Theorem \ref{thm:ZB_CZ_update}. 
\begin{thm}[Decomposition-Based ZB/CZ Update]\label{thm:ZB_CZ_update}
Suppose all the assumptions in Lemmas \ref{lem:ZB_update} and \ref{lem:CZ_update} hold. Then $$\mathcal{Z}_f \cap_{\mu} \mathcal{Z}_{\mu} \subseteq \mathcal{ZB}_u \cap \mathcal{CZ}_u,$$ where $\mathcal{ZB}_u,\mathcal{CZ}_u$ are given in Lemmas \ref{lem:ZB_update} and \ref{lem:CZ_update}, respectively.
\end{thm}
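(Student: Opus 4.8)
The plan is to mirror the short argument used for Theorem~\ref{thm:ZB_CZ_propagation}: the claimed inclusion should follow immediately by applying the two preceding lemmas to one and the same generalized intersection and then intersecting the resulting enclosures. First I would note that, by Definition~\ref{defn:CZ_ZB} and the representation equivalence recorded there, the polytopic sets $\mathcal{Z}_f$ and $\mathcal{Z}_{\mu}$ possess both a zonotope-bundle representation ($\mathcal{ZB}_f,\mathcal{ZB}_{\mu}$) and a constrained-zonotope representation ($\mathcal{CZ}_f,\mathcal{CZ}_{\mu}$), all four describing the same underlying sets; hence the single set $\mathcal{Z}_f \cap_{\mu} \mathcal{Z}_{\mu}$ can be over-approximated along two routes.

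Next, invoking Lemma~\ref{lem:ZB_update} with the ZB representations gives $\mathcal{Z}_f \cap_{\mu} \mathcal{Z}_{\mu} = \mathcal{ZB}_f \cap_{\mu} \mathcal{ZB}_{\mu} \subseteq \mathcal{ZB}_u$, while invoking Lemma~\ref{lem:CZ_update} with the CZ representations gives $\mathcal{Z}_f \cap_{\mu} \mathcal{Z}_{\mu} = \mathcal{CZ}_f \cap_{\mu} \mathcal{CZ}_{\mu} \subseteq \mathcal{CZ}_u$. Since both inclusions share the same left-hand side, that left-hand side is contained in the intersection of the two right-hand sides, i.e., $\mathcal{Z}_f \cap_{\mu} \mathcal{Z}_{\mu} \subseteq \mathcal{ZB}_u \cap \mathcal{CZ}_u$, which is precisely the assertion.

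No real obstacle arises here, since all the technical work was already carried out in Lemmas~\ref{lem:ZB_update} and \ref{lem:CZ_update}; the one point deserving a sentence of care is that the operator $\cap_{\mu}$ and the object it acts on are unaffected by re-expressing a polytope as a ZB versus a CZ, so that both lemmas legitimately apply to the same set --- this is immediate from the equivalence in Definition~\ref{defn:CZ_ZB}. One could additionally remark, for completeness, that $\mathcal{ZB}_u \cap \mathcal{CZ}_u$ is a finite intersection of constrained zonotopes and hence is itself a ZB/CZ, so the update step again returns a set of the same class, ready to be fed into the next propagation step.
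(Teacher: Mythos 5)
Your argument is correct and is essentially identical to the paper's own proof: both lemmas are applied to the same generalized intersection (under its equivalent ZB and CZ representations) and the two enclosures are intersected. Your extra remark that $\cap_{\mu}$ is unaffected by the choice of representation, justified by the equivalence in Definition~\ref{defn:CZ_ZB}, is a point the paper leaves implicit but is exactly the right thing to note.
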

\begin{proof}
By Lemmas \ref{lem:ZB_update} and \ref{lem:CZ_update}: $\mathcal{Z}_f \cap_{\mu} \mathcal{Z}_{\mu} \subseteq \mathcal{ZB}_u $ and $\mathcal{Z}_f \cap_{\mu} \mathcal{Z}_{\mu} \subseteq \mathcal{CZ}_u$, and hence $\mathcal{Z}_f \cap_{\mu} \mathcal{Z}_{\mu} \subseteq \mathcal{ZB}_u \cap \mathcal{CZ}_u$.
\end{proof}
\subsection{Modifications to The Approach in \cite{rego2020guaranteed}}
The purpose of this subsection is twofold. i) We make a potential refinement/improvement to the propagation approach in \cite[Theorem 2]{rego2020guaranteed} (recapped in Proposition \ref{prop:Rego_propagation}) through the following Proposition \ref{prop:Rego_propagation_refinement}, by applying our previously developed remainder-form decomposition functions to compute potentially tighter enclosing intervals to Jacobian matrix of $f(\cdot)$. ii) We propose an update method via Lemma \ref{lem:Rego_update}, that is based on the ``CZ-inclusion" introduced in \cite[Theorem 1]{rego2020guaranteed} (recapped in Proposition \ref{prop:Rego_CZ_bounding}). The proposed update method is applicable to general nonlinear observation functions (similar to the proposed methods in Lemmas \ref{lem:ZB_update} and \ref{lem:CZ_update}), as opposed to the update (i.e, linear intersection) approach in \cite{rego2020guaranteed} that is only applicable when the observation function is linear. 
\begin{prop}[Refinement to The Propagation Approach in \cite {rego2020guaranteed}]\label{prop:Rego_propagation_refinement}
Suppose all the assumptions in Proposition \ref{prop:Remainder_DF} (i.e, \cite[Theorem 2]{rego2020guaranteed}) hold. Then the set inclusion in \eqref{eq:Rego_prediction} holds also with replacing $\mathbb{J}$ with $\tilde{\mathbb{J}}$ (or the best (tightest) of them), where $\tilde{\mathbb{J}}$ is an enclosing interval to $g(x)\triangleq \nabla^\top_x f(X,W)$ that can be computed by applying Proposition \ref{prop:Remainder_DF} to the function $g(\cdot)$.
\end{prop}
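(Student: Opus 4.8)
The plan is to observe that Proposition~\ref{prop:Rego_propagation} (i.e., \cite[Theorem 2]{rego2020guaranteed}) is entirely \emph{agnostic} to how the interval matrix $\mathbb{J}$ is produced: its proof only invokes the mean value theorem applied to $f(\cdot,w)$ between $h$ and $x$, the resulting membership $f(x,w)\in f(h,w)\oplus\mathbb{J}(x-h)$ guaranteed by the single property $\nabla_x^\top f(\mathcal{X},\mathcal{W})\subseteq\mathbb{J}$, and then the CZ-inclusion of Proposition~\ref{prop:Rego_CZ_bounding} applied to $\mathbb{J}(\mathcal{X}-h)$. Hence, to establish the claim it suffices to show that the interval matrix $\tilde{\mathbb{J}}$ constructed by applying the remainder-form decomposition machinery of Proposition~\ref{prop:Remainder_DF}/Corollary~\ref{cor:supproting_vectors} to the map $g(\cdot)\triangleq\nabla_x^\top f(\cdot)$ is itself a \emph{valid} Jacobian enclosure, and then to replay the argument of \cite[Theorem 2]{rego2020guaranteed} verbatim with $\mathbb{J}$ replaced by $\tilde{\mathbb{J}}$ (and $\tilde{P}$ recomputed through \eqref{eq:Rego_P} using $\tilde{\mathbb{J}}$ in place of $\mathbb{J}$).

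First I would note that, under ``all the assumptions in Proposition~\ref{prop:Remainder_DF}'' (which in particular forces $\nabla_x f$ to be locally Lipschitz with bounded Jacobian entries over $\mathcal{X}\times\mathcal{W}$, i.e.\ $f\in C^{1,1}$ with bounded Hessian there), each scalar component $g_{ij}(x,w)=\tfrac{\partial f_i}{\partial x_j}(x,w)$ meets the hypotheses of Proposition~\ref{prop:Remainder_DF}. Applying that proposition over the interval hull of $\mathcal{X}\times\mathcal{W}$ yields, for every $(i,j)$, tight upper and lower decomposition-function values $\overline{g}_{d,ij},\underline{g}_{d,ij}$ with $\underline{g}_{d,ij}\le g_{ij}(x,w)\le\overline{g}_{d,ij}$ for all $(x,w)$ in that hull, hence for all $(x,w)\in\mathcal{X}\times\mathcal{W}$. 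Stacking these entrywise bounds defines $\tilde{\mathbb{J}}\triangleq[\underline{g}_d,\overline{g}_d]\in\mathbb{IR}^{n\times n}$, and by construction $\nabla_x^\top f(\mathcal{X},\mathcal{W})=g(\mathcal{X},\mathcal{W})\subseteq\tilde{\mathbb{J}}$, which is exactly the property required of $\mathbb{J}$ in Proposition~\ref{prop:Rego_propagation}.

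Next I would invoke Proposition~\ref{prop:Rego_propagation} with $\tilde{\mathbb{J}}$ playing the role of $\mathbb{J}$: keeping the same $h\in\mathcal{X}$ and the same $\mathcal{Z}\supseteq f(h,\mathcal{W})$, and letting $\tilde{P}$ be computed from \eqref{eq:Rego_P} using $\tilde{\mathbb{J}}$ and an enclosing zonotope of $\mathcal{X}-h$, the conclusion gives $f(\mathcal{X},\mathcal{W})\subseteq\mathcal{Z}\oplus\textstyle{\mathrm{mid}}(\tilde{\mathbb{J}})(\mathcal{X}-h)\oplus\tilde{P}\,\mathbb{B}_\infty^n$, i.e.\ \eqref{eq:Rego_prediction} with $\mathbb{J}$ replaced by $\tilde{\mathbb{J}}$. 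For the ``best (tightest) of them'' clause, since both $\mathbb{J}$ and $\tilde{\mathbb{J}}$ are valid Jacobian enclosures, so is their entrywise intersection $\mathbb{J}\cap\tilde{\mathbb{J}}$, which is no larger than either; applying Proposition~\ref{prop:Rego_propagation} once more with this intersection — again a legitimate choice — yields the inclusion with the potentially tighter interval.

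The main obstacle is really just a regularity-bookkeeping point: Proposition~\ref{prop:Remainder_DF} is stated for locally Lipschitz vector fields with bounded Jacobian entries, so applying it to $g=\nabla_x^\top f$ tacitly presumes $f\in C^{1,1}$ with bounded Hessian over $\mathcal{X}\times\mathcal{W}$, a mild strengthening of Assumption~\ref{assumption:mix-lip} that is subsumed under the stated hypotheses. Once that is granted, nothing is delicate: the decomposition-function bounds are applied entrywise to obtain $\tilde{\mathbb{J}}$, and everything else is a direct appeal to the already-established Proposition~\ref{prop:Rego_propagation}.
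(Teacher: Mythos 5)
Your proposal is correct and takes essentially the same route as the paper: the paper's own proof is the one-liner ``Directly follows from Proposition~\ref{prop:Remainder_DF},'' and your argument simply spells out the details that one-liner presupposes, namely that $\tilde{\mathbb{J}}$ obtained from the remainder-form decomposition of $g=\nabla_x^\top f$ is a valid Jacobian enclosure and that Proposition~\ref{prop:Rego_propagation} is agnostic to how its interval matrix is produced. Your added remark on the implicit $C^{1,1}$ regularity needed to apply Proposition~\ref{prop:Remainder_DF} to $\nabla_x^\top f$ is a fair observation but does not change the argument.
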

\begin{proof}
Directly follows from Proposition \ref{prop:Remainder_DF}.
\end{proof}
\begin{lem}[Update Based on ``CZ-Inclusion" in \cite {rego2020guaranteed}]\label{lem:Rego_update}
Suppose all the assumptions in Lemma \ref{lem:CZ_update} hold. Let $x_0 \in \mathcal{CZ}_f$ and $\mathbb{J}^{\mu},\mathbb{J}^{\mu}_{\Delta} \in \mathbb{R}^{n_{\mu} \times n_x}$ be interval matrices satisfying $J^\mu(\mathcal{CZ}_f) \subseteq \mathbb{J}^{\mu}$ and $\forall i \in \mathbb{N}_{n_{\mu}},\forall j \in \mathbb{N}_{n_{x}},[\mathbb{J}^{\mu}_{\Delta}]_{ij} \triangleq \frac{1}{2}\begin{bmatrix} - \textstyle{\mathrm{diam}}(\mathbb{J}^{\mu})_{ij} & \textstyle{\mathrm{diam}}(\mathbb{J}^{\mu})_{ij} \end{bmatrix}$, where $J^{\mu}$ denotes the Jacobian of $\mu(\cdot)$. Let $\overline{\mathcal{Z}}_{f}=\{\overline{G}^{f},\overline{c}^{f}\}_Z$ be a zonotope satisfying $\mathcal{CZ}_f \ominus x_0 \subseteq \overline{\mathcal{Z}}_f$, with $\overline{c}^f \in \mathbb{R}^{\overline{n}}$. Let $\mathbf{m}^{\mu} \in \mathbb{R}^{n_{\mu}}$ be an interval vector such that $\mathbf{m}^{\mu} \supset \mathbb{J}_{\Delta}^{\mu}\overline{c}^f$ and $\textstyle{\mathrm{mid}}(\mathbf{m}^{\mu})=\mathbf{0}_{n_{\mu}}$. Let $P^{\mu} \in \mathbb{R}^{n_{\mu} \times n_{\mu}}$ be a diagonal matrix defined as follows: $\forall i=1,\dots,n_{\mu}$: 
\begin{align}\label{eq:Rego_P_update}
P^{\mu}_{ii}\hspace{-.1cm}=\hspace{-.1cm}\frac{1}{2}\textstyle{\mathrm{diam}}(\mathbf{m}^{\mu})_i\hspace{-.1cm}+\hspace{-.1cm}\frac{1}{2}\sum_{j=1}^{\overline{n}} \sum_{k=1}^{n_x} \textstyle{\mathrm{diam}}(\mathbb{J}^{\mu}_{\Delta})_{ik}|\overline{G}^f_{kj}|.
\end{align}
Then, the following set inclusion holds:
\begin{align} \label{eq:Rego_update}
\mathcal{CZ}_f \cap_{\mu} \mathcal{CZ}_{\mu} \subseteq \mathcal{CZ}^R_u \triangleq \{{G}_u,{c}_u,{A}_{u},{b}_{u}\}_{CZ}, 
\end{align} 
where $ {b}_{u}\hspace{-.1cm} \triangleq\hspace{-.1cm} [(\tilde{c}_{\mu}\hspace{-.1cm}-\mu(x_0)+\textstyle{\mathrm{mid}}(\mathbb{J}^{\mu})(x_0-\tilde{c}_f)-c_R)^\top \ \tilde{b}^\top_{f} \ \tilde{b}^\top_{\mu} \ {b}^\top_{R} ]^\top$
\begin{align}
\nonumber {A}_{u} &\triangleq \begin{bmatrix} \textstyle{\mathrm{mid}}(\mathbb{J}^{\mu})\tilde{G}_f & -\tilde{G}_{\mu} & G_R \\ \tilde{A}_f & 0 & 0 \\ 0 & \tilde{A}_{\mu} & 0 \\ 0 & 0 & A_R \end{bmatrix}, \ {G}_u \triangleq [\tilde{G}_f \ {0} \ 0], \\
  G_R &\triangleq [0 \ P^{\mu}], \ c_R \triangleq 0, \ A_R \triangleq [\tilde{A}_f \ 0], \ b_R \triangleq \tilde{b}_f.\label{eq:Rego_param}
\end{align} 
\end{lem}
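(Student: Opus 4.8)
The plan is to transcribe the ``mean value $+$ CZ-inclusion'' machinery of \cite{rego2020guaranteed} (Propositions \ref{prop:Rego_propagation} and \ref{prop:Rego_CZ_bounding}) from the propagation setting to the update setting, using $x_0$ in place of the base point $h$, $\mu$ in place of the vector field $f$, and then threading the extra membership $\mu(z)\in\mathcal{CZ}_\mu$ through the constraint bookkeeping so that the final object comes out as the single constrained zonotope $\mathcal{CZ}^R_u$ with parameters \eqref{eq:Rego_param}.

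First I would pick an arbitrary $z\in\mathcal{CZ}_f\cap_\mu\mathcal{CZ}_\mu$, so $z\in\mathcal{CZ}_f$ and $\mu(z)\in\mathcal{CZ}_\mu$. Since $\mathcal{CZ}_f$ is a polytope, hence convex, and $x_0,z\in\mathcal{CZ}_f$, the segment $[x_0,z]$ lies in $\mathcal{CZ}_f$; applying a mean-value inclusion to $\mu$ along this segment (using local Lipschitzness of $\mu$ and $J^\mu(\mathcal{CZ}_f)\subseteq\mathbb{J}^\mu$ from Assumption \ref{assumption:mix-lip}) produces some $J\in\mathbb{J}^\mu$ with $\mu(z)=\mu(x_0)+J(z-x_0)$. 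I would then decompose $J=\textstyle{\mathrm{mid}}(\mathbb{J}^\mu)+\Delta$ with $\Delta\in\mathbb{J}^\mu_\Delta$ (Proposition \ref{prop:interval_cen_rep}), giving $\mu(z)=\mu(x_0)+\textstyle{\mathrm{mid}}(\mathbb{J}^\mu)(z-x_0)+\Delta(z-x_0)$ with the last summand in $\mathbb{J}^\mu_\Delta(\mathcal{CZ}_f\ominus x_0)$.

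Next I would feed $\mathbb{J}^\mu_\Delta(\mathcal{CZ}_f\ominus x_0)$ into Proposition \ref{prop:Rego_CZ_bounding}, instantiated with the CZ $\mathcal{CZ}_f\ominus x_0=\{\tilde G_f,\tilde c_f-x_0,\tilde A_f,\tilde b_f\}_{CZ}$, the enclosing zonotope $\overline{\mathcal{Z}}_f$, and the vector $\mathbf{m}^\mu$; because $\textstyle{\mathrm{mid}}(\mathbb{J}^\mu_\Delta)=\mathbf{0}$, the diagonal matrix in \eqref{eq:Rego_P} becomes exactly $P^\mu$ of \eqref{eq:Rego_P_update} and the bounding CZ \eqref{eq:CZ_bounding} collapses to $\{G_R,c_R,A_R,b_R\}_{CZ}$. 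Hence $\Delta(z-x_0)=G_R\xi+c_R$ for some $\xi$ with $A_R\xi=b_R$ and $\|\xi\|_\infty\le1$. Substituting the CZ representations $z=\tilde G_f\beta+\tilde c_f$ (with $\tilde A_f\beta=\tilde b_f$, $\|\beta\|_\infty\le1$) and $\mu(z)=\tilde G_\mu\gamma+\tilde c_\mu$ (with $\tilde A_\mu\gamma=\tilde b_\mu$, $\|\gamma\|_\infty\le1$) into $\mu(z)=\mu(x_0)+\textstyle{\mathrm{mid}}(\mathbb{J}^\mu)(z-x_0)+G_R\xi+c_R$ and rearranging yields the linear equality $\textstyle{\mathrm{mid}}(\mathbb{J}^\mu)\tilde G_f\beta-\tilde G_\mu\gamma+G_R\xi=\tilde c_\mu-\mu(x_0)+\textstyle{\mathrm{mid}}(\mathbb{J}^\mu)(x_0-\tilde c_f)-c_R$; stacking this with $\tilde A_f\beta=\tilde b_f$, $\tilde A_\mu\gamma=\tilde b_\mu$, $A_R\xi=b_R$ and $z=G_u[\beta^\top\ \gamma^\top\ \xi^\top]^\top+\tilde c_f$ shows $z\in\{G_u,\tilde c_f,A_u,b_u\}_{CZ}=\mathcal{CZ}^R_u$, which is the desired inclusion since $z$ was arbitrary.

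The hard part will be twofold. First, making the mean-value inclusion $\mu(z)\in\mu(x_0)\oplus\mathbb{J}^\mu(z-x_0)$ airtight under mere local Lipschitzness rather than continuous differentiability; I would resolve this via the integral form $\mu(z)-\mu(x_0)=\big(\int_0^1 J^\mu(x_0+t(z-x_0))\,dt\big)(z-x_0)$, noting that $\int_0^1 J^\mu(\cdot)\,dt$ lies in the box $\mathbb{J}^\mu$ because the integrand does and $\mathbb{J}^\mu$ is convex. Second, the bookkeeping of preserving the correlation of the generator $\beta$ of $z$ across both the ``$z\in\mathcal{CZ}_f$'' part and the ``$\textstyle{\mathrm{mid}}(\mathbb{J}^\mu)(z-x_0)$'' contribution, so that the union over $z$ is genuinely the single CZ $\mathcal{CZ}^R_u$ and not a looser decoupled over-approximation; once the equalities above are laid out, matching them block-by-block against $A_u,b_u,G_u$ in \eqref{eq:Rego_param} is routine.
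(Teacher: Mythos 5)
Your proposal is correct and follows essentially the same route as the paper's proof: mean-value expansion of $\mu$ about $x_0$ with $\mathbb{J}^\mu=\textstyle{\mathrm{mid}}(\mathbb{J}^\mu)\oplus\mathbb{J}^\mu_\Delta$ (Proposition \ref{prop:interval_cen_rep}), enclosure of $\mathbb{J}^\mu_\Delta(\mathcal{CZ}_f\ominus x_0)$ via Proposition \ref{prop:Rego_CZ_bounding}, and block-wise matching of the resulting equality constraints against $G_u,A_u,b_u$. Your integral-form justification of the mean-value inclusion under local Lipschitzness is a slightly more careful treatment of a step the paper states without elaboration, but it does not change the argument.
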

\begin{proof}
Suppose $z \in \mathcal{CZ}_f \cap_{\mu} \mathcal{CZ}_{\mu}$. Then by definition of the operator $\cap_{\mu}$ (cf. \eqref{eq:update}), $z \in \mathcal{CZ}_f$ and $\mu(z) \in \mathcal{CZ}_{\mu}$. Note that by Proposition \ref{prop:interval_cen_rep} and the mean value theorem, $z \in \mathcal{CZ}_f \Rightarrow \mu(z) \in \mu(\mathcal{CZ}_f) \subseteq \mu(x_0)\oplus \mathbb{J}^{\mu}(\mathcal{CZ}_f \ominus x_0)=\mu(x_0)\oplus (\textstyle{\mathrm{mid}}(\mathbb{J}^{\mu})+\mathbb{J}^{\mu}_{\Delta})(\mathcal{CZ}_f \ominus x_0)=$
\begin{align}\label{eq:MVT}
\mu(x_0)-\textstyle{\mathrm{mid}}(\mathbb{J}^{\mu})\oplus \textstyle{\mathrm{mid}}(\mathbb{J}^{\mu})\mathcal{CZ}_f\oplus \mathbb{J}^{\mu}_{\Delta}(\mathcal{CZ}_f \ominus x_0).
\end{align}
On the other hand, by Proposition \ref{prop:Rego_CZ_bounding} : 
\begin{align}\label{Rego_inclusion}
\mathbb{J}^{\mu}_{\Delta}(\mathcal{CZ}_f \ominus x_0) \subseteq \mathcal{CZ}_R \triangleq \{G_R,c_R,A_R,b_R\}_{CZ},
\end{align}
with $G_R,c_R,A_R,b_R$ given in \eqref{eq:Rego_P_update} and \eqref{eq:Rego_param} (note that $\textstyle{\mathrm{mid}}(\mathbb{J}^{\mu}_{\Delta})=0$ by its definition) and where $\mathcal{CZ}_R$ has $n_R$ generators. Then, the facts that $z \in \mathcal{CZ}_f \triangleq \{\tilde{G}_f\beta+\tilde{c}_f |\tilde{A}_f \beta=\tilde{b}_f,\beta\in \mathbb{B}^{n_{c}}_{\infty}\}$, $\mu(z) \in \mathcal{CZ}_{\mu} \triangleq \{\tilde{G}_{\mu}\gamma+\tilde{c}_{\mu} |\tilde{A}_{\mu} \gamma=\tilde{b}_{\mu},\gamma \in \mathbb{B}^{n_{\tau}}_{\infty}\}$, \eqref{eq:MVT} and \eqref{Rego_inclusion} imply that
$z \in  \{\tilde{G}_f\beta+\tilde{c}_f|\tilde{c}_{\mu}+\tilde{G}_{\mu}\gamma=\mu(x_0)+\textstyle{\mathrm{mid}}(\mathbb{J}^{\mu})(\tilde{c}_f-x_0)+\textstyle{\mathrm{mid}}(\mathbb{J}^{\mu})\beta+C_R+G_R\xi_R, \tilde{A}_f \beta=\tilde{b}_f,\tilde{A}_{\mu}\gamma=\tilde{b}_{\mu},A_R \xi_R=b_R,
\beta\in \mathbb{B}^{n_{c}}_{\infty},\gamma \in \mathbb{B}^{n_{\tau}}_{\infty},\xi_R \in \mathbb{B}^{n_{R}}_{\infty}\}$, where the set on the right hand side is equivalent to the CZ on the right hand side of \eqref{eq:Rego_update}. 
\end{proof}

 \section{Simulations} \label{sec:examples}\vspace{-0.05cm}
In this section we compare the performance of five approaches to guaranteed state estimation: i) {RRSR}, i.e., the mean value extension-based propagation introduced in \cite{rego2020guaranteed} (recapped in Proposition \ref{prop:Rego_propagation}) in addition to the update approach in \cite{rego2020guaranteed} for the case when the observation function is linear (e.g., Example I below) and its extension in Lemma \ref{lem:Rego_update} to nonlinear measurements (e.g., Example II below), ii) {D-RRSR}, i.e, a modification to RRSR where the bounds for Jacobian matrices are computed using the reminder-form decomposition functions (cf. Proposition \ref{prop:Rego_propagation_refinement}), iii) {D-ZB}, i.e., decomposition-based propagation and update with ZBs (cf. Lemmas \ref{lem:ZB_propagation} and \ref{lem:ZB_update}), iv) {D-CZ}, i.e., decomposition-based propagation and update with CZs (cf. Lemmas \ref{lem:CZ_propagation} and \ref{lem:CZ_update}) and v) {COMB}, i.e., a combination of i)--v) via intersection.   
 \subsection{Example I}
Consider the following discrete-time system \cite[Section]{rego2020guaranteed}
\begin{align*}
\begin{array}{rll}
x_{1,k}&=3x_{1,k-1}-\frac{x_{1,k-1}^2}{7}-\frac{4x_{1,k-1}x_{2,k-1}}{4+x_{1,k-1}}+w_{1,k-1},\\
x_{2,k}&=-2x_{2,k-1}+\frac{3x_{1,k-1}x_{2,k-1}}{4+x_{1,k-1}}+w_{2,k-1},\\
\begin{bmatrix}
y_{1,k}\\
y_{2,k}
\end{bmatrix}&=
\begin{bmatrix}
1 & 0\\
-1 & 1\end{bmatrix}
\begin{bmatrix}
x_{1,k}\\
x_{2,k}\end{bmatrix}+
\begin{bmatrix}
v_{1,k}\\
v_{2,k}
\end{bmatrix},
\end{array}
\end{align*}
with $\left\lVert w_k\right\rVert_\infty \leq 0.1$, $\left\lVert v_k\right\rVert_\infty \leq 0.4$, 
and an initial zonotopic enclose for the initial state:
    $ \mathcal{X}_0=\{\begin{bmatrix}
     0.1& 0.2& -0.1\\
     0.1&0.1&0
     \end{bmatrix},
     \begin{bmatrix}
     0.5\\0.5
     \end{bmatrix}\}$.

\begin{figure}[h]
\includegraphics[scale=0.152,trim=27mm 0mm 5mm 5mm,clip]{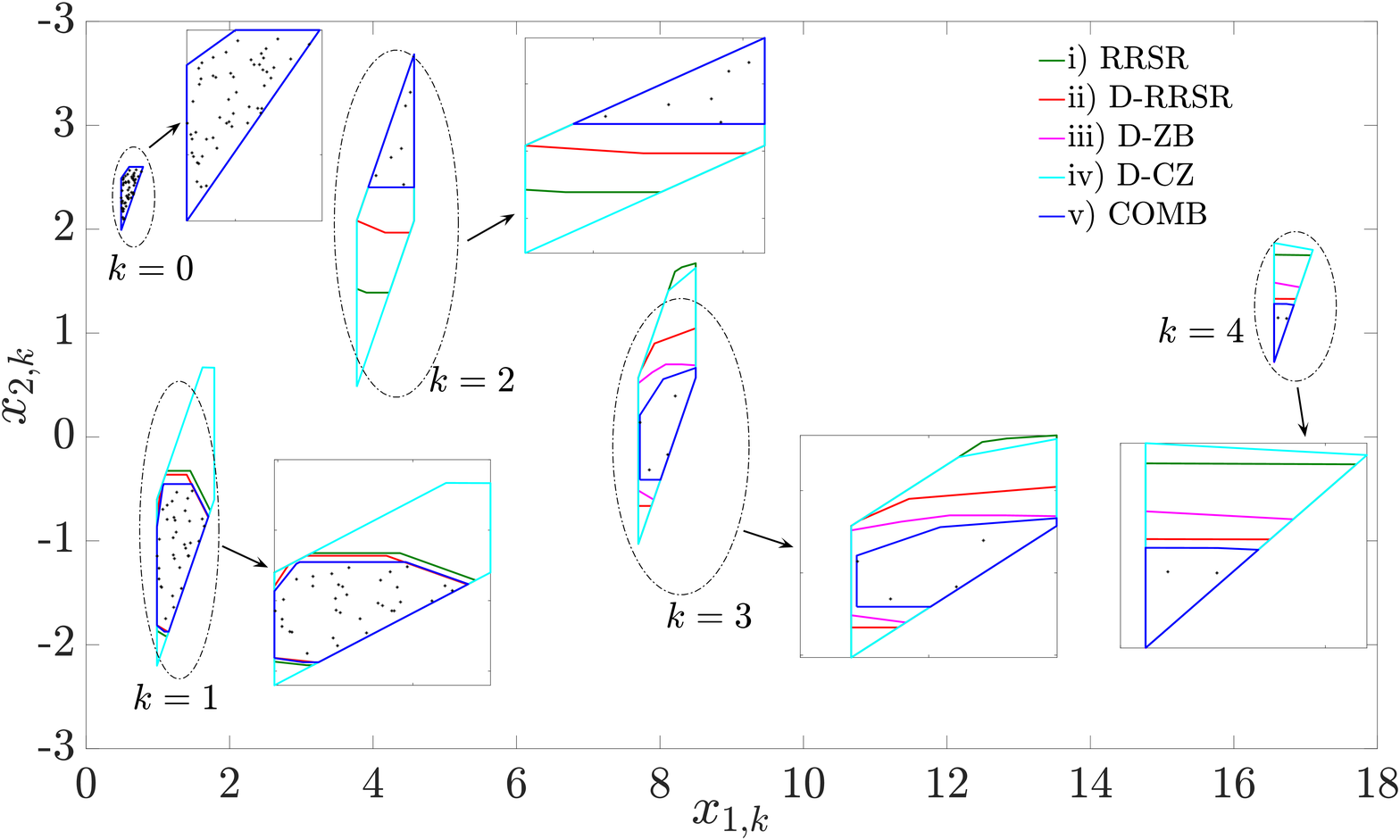}
\caption{Results from the first five time steps of set-valued state estimation, using five different approaches, noticeable from the label on the figure. Black dots are obtained from uniform sampling of $X_0$ propagated through the vector field in Example I\label{Ex_Rego}}
\end{figure}
As can be seen from Figure \ref{Ex_Rego}, {D-ZB} provides less conservative enclosures compared to the other individual approaches, and further, the {COMB} approach results in significant improvement, taking the advantage of intersection. Moreover, a comparison between corresponding average times and the enclosure set volumes to the five approaches shown in Table \ref{Table_Rego}. As it can be observed, D-ZB gives the fastest responses, while combination of all, as expected, took the longest, while RRSR and D-RRSR took approximately the same time on average. In terms of volume, D-ZB and D-RRSR generate the least conservative enclosures compared to the other individual approaches, while a further improvement is obtained using the the intersection of all the individual approaches, i.e., COMB.
\begin{table}[H]
\captionsetup{font=small}
\caption{Average total times (second) and average total volumes ($10^{-4}$) at each time step of the estimators is calculated. Each average is taken over $50$ separate simulations.}

\begin{centering}
\tabcolsep=0.11cm 
\begin{tabular}{|l|l l l l l l|}

\hline
\multicolumn{2}{|c}{Methods:\quad  \quad \quad \quad } & $k=0$ & $k=1$ & $k=2$ & $k=3$ &$k=4$\tabularnewline
\hline 
\multirow{2}{*}{RRSR} & Time: & $0.087$ & $0.250$ & $0.193$ & $0.196$ & $0.204$\\
 & Vol.: & $0.5196$ & $2.3522$ & $3.6224$ & $3.0243$ & $1.8968$\tabularnewline
\hline 
\multirow{2}{*}{D-RRSR} & Time: & $0.022$ & $0.225$ & $0.181$ & $0.198$ & $0.201$\tabularnewline
 & Vol.: & $0.5196$ & $2.1062$ & $3.2979$ & $1.3811$ & $0.6320$\tabularnewline
\hline 
\multirow{2}{*}{D-ZB} & Time: & $0.192$ & $0.095$ & $0.091$ & $0.091$ & $0.123$\tabularnewline
 & Vol.: & $0.5196$ & $0.7974$ & $0.9918$ & $0.8316$ & $0.6170$\tabularnewline
\hline 
\multirow{2}{*}{D-CZ} & Time: & $0.065$ & $2.825$ & $2.920$ & $2.118$ & $3.318$\tabularnewline
 & Vol.: & $0.5196$ & $1.8857$ & $2.1534$ & $1.7924$ & $1.3764$\tabularnewline
\hline 
\multirow{2}{*}{COMB} & Time: & $0.063$ & $6.193$ & $6.882$ & $6.278$ & $6.951$\tabularnewline
 & Vol.: & $0.5196$ & $1.4226$ & $1.8399$ & $1.0038$ & $0.6387$\tabularnewline
\hline 
\end{tabular}
\label{Table_Rego}
\par\end{centering}
\end{table}
  \subsection{Example II (The Unicycle System)}
 Now consider the following discretized unicycle-like mobile robot system \cite{chen2018nonlinear} in the form of \eqref{eq:system}: 
  \begin{align}\label{eq:unicyclemodel}
  \begin{array}{rlll}
      s_{x,k+1}&=s_{x,k}+T_0\phi_{w}\cos(\theta_k)+w_{1,k},\\
       s_{y,k+1}&=s_{y,k}+T_0\phi_{w}\sin (\theta_{k})+w_{2,k},\\
      \theta_{k+1}&=\theta_k+T_0\phi_{\theta}+w_{3,k},\\
      y_k&=[d_{1,k} \ \phi_{1,k} \ d_{2,k} \ \phi_{2,k}]^\top +v_k,
      \end{array}
  \end{align}
where $x_k \triangleq [s_{x,k} \ s_{y,k} \ \theta_k]^\top$, $w_k=[w_{x,k} \ w_{y,k} \ w_{\theta,k}]^\top, \phi_{\omega,k}=0.3, \phi_{\theta,k}=0.15,  w_{x,k}=0.2(0.5\rho_{x_{1,k}}-0.3), w_{y,k}=0.2(0.3\rho_{x_{2,k}}-0.2)$ and $w_{\theta,k}=0.2(0.6\rho_{x_{3,k}}-0.4)$, with $\rho_{x_{l,k}} \in [0,1]$ $(l=1,2,3)$ and initial state $x_0=[0.1 \ 0.2 \ 1]^\top$. Moreover, $\forall i \in \{1,2\}$, $d_{i,k}=\sqrt{(s_{x_i}-s_{x,k})^2+(s_{y_i}-s_{y,k})^2}$ and $\phi_{i,k}=\theta_k-\arctan(\frac{s_{y_i}-s_{y,k}}{s_{x_i}-s_{x,k}})$, with $s_{x_i} , s_{y_i}$ being two known values. Furthermore, $v_{1,k}=0.02 \rho_{{y_1},k}-0.01$, $v_{2,k}=0.03 \rho_{{y_2},k}-0.01$, $v_{3,k}=0.03 \rho_{{y_3},k}-0.02$, $v_{4,k}=0.05 \rho_{{y_4},k}-0.03$ and $\rho_{{y_k},k} \in [0,1]$ $(k=1,2,3,4)$. Applying all the methods, one can observe from Figure \ref{Ex_Unicycle}
that the resultant set estimates are very comparable for all the different five approaches. In terms of computation time, Table \ref{Table_Unicycle} shows that D-CZ takes minimum computation time followed by RRSR, D-RRSR, COMB and D-ZB. In terms of set volumes, the COMB approach takes minimum volume followed by D-ZB, D-CZ, RRSR and D-RRSR. Note that the computation time for D-ZB's is exceptionally large, presumably because the conversion of a polytope to a zonotope bundle in CORA usually results in a higher number of zonotopes than the needed minimal  number of zonotopes in the zonotope bundle. 
The reduction of this number of zonotopes in the bundle could be an interesting topic to explore in the future, which could significantly decrease the computation time of our D-ZB approach. 
  \begin{figure}[h]
\includegraphics[scale=0.152,trim=20mm 0mm 5mm 5mm,clip]{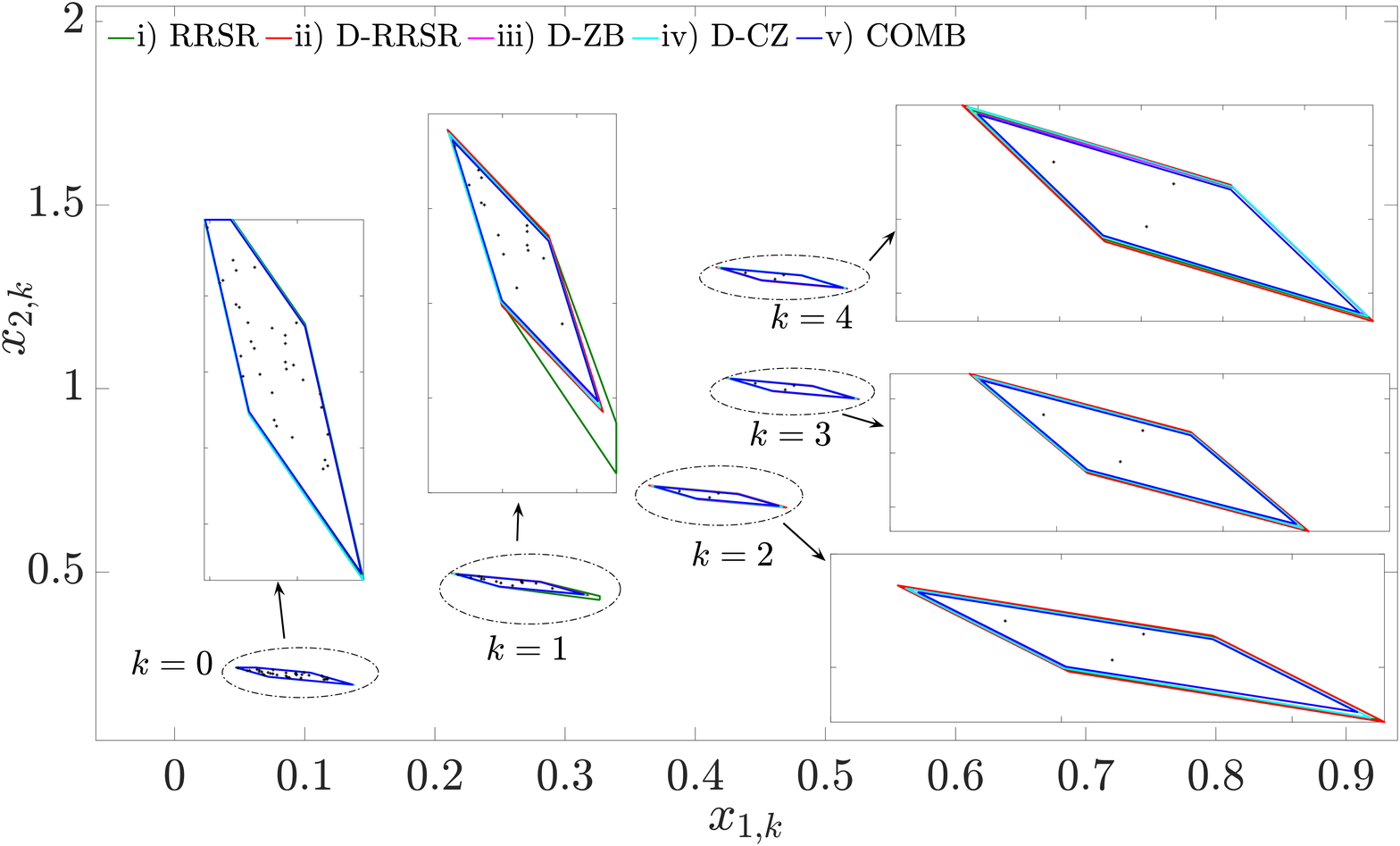}\vspace{-.15cm}
\caption{Results from the first five time steps of set-valued state estimation, using five different approaches, noticeable from the label on the figure. Black dots are obtained from uniform sampling of $X_0$ propagated via \eqref{eq:unicyclemodel}}\label{Ex_Unicycle}
\end{figure}
  

\begin{table}
\captionsetup{font=small}
\caption{Average total times (second) and average total volumes ($10^{-4}$) at each time step of the estimators is calculated. Each average is taken over $20$ separate simulations.}

\centering{}%
\tabcolsep=0.11cm 
\begin{tabular}{|l| l l l l l l|}
\hline 
\multicolumn{2}{|c}{Methods:  \quad \quad \quad } & $k=0$ & $k=1$ &$k=2$& $k=3$ & $k=4$\tabularnewline
\hline 
\multirow{2}{*}{RRSR} & Time: & $0.772$ & $4.256$ & $4.188$ & $2.995$ & $3.675$\tabularnewline
 & Vol.: & $1.7309$ & $1.0293$ & $5.528$ & $1.377$ & $1.413$\tabularnewline
\hline 
\multirow{2}{*}{D-RRSR} & Time: & $1.669$ & $42.90$ & $45.57$ & $28.64$ & $50.53$\tabularnewline
 & Vol.: & $1.279$ & $9.717$ & $5.654$ & $1.978$ & $1.888$\tabularnewline
\hline 
\multirow{2}{*}{D-ZB} & Time: & $1.397$ & $34.20$ & $163.08$ & $147.75$ & $131.75$\tabularnewline
 & Vol.: & $0.042$ & $0.030$ & $0.013$ & $0.043$ & $0.042$\tabularnewline
\hline 
\multirow{2}{*}{D-CZ} & Time: & $0.602$ & $2.182$ & $2.016$ & $2.228$ & $2.591$\tabularnewline
 & Vol.: & $0.073$ & $0.072$ & $0.030$ & $1.261$ & $0.091$\tabularnewline
\hline 
\multirow{2}{*}{COMB} & Time: & $0.236$ & $34.90$ & $65.50$ & $62.37$ & $57.73$\tabularnewline
 & Vol.: & $0.041$ & $0.028$ & $0.012$ & $0.041$ & $0.040$\tabularnewline
\hline 
\end{tabular}
\label{Table_Unicycle}
\end{table}
\vspace{-.4cm}
\section{Conclusion} \label{sec:conclusion}
New set-membership methods were presented in this paper for   state estimation in   bounded-error discrete-time nonlinear systems, subject to   nonlinear observations/constraints. By transforming ZB/CZ uncertainty sets from the state space  to  the space of ZB/CZ's generators, our recently developed tight remainder-form mixed-monotone decomposition functions were applied to compute enclosures that were guaranteed to enclose the set of the state trajectories of the system. Further, remainder-form decomposition functions were leveraged to bound the nonlinear observation function to derive the updated set, i.e., to return enclosures to intersection of the propagated set and the set of states that are consistent with the measurements. Finally, the mean value extension-based approach in \cite{rego2020guaranteed} was also generalized to compute the updated set for nonlinear observations. 

{\tiny
\bibliographystyle{unsrturl}
\bibliography{biblio}

\begin{thebibliography}{10}

\bibitem{rego2020guaranteed}
B.S. Rego, G.V. Raffo, J.K. Scott, and D.M. Raimondo.
\newblock Guaranteed methods based on constrained zonotopes for set-valued
  state estimation of nonlinear discrete-time systems.
\newblock {\em Automatica}, 111:108614, 2020.

\bibitem{combastel2008fault}
C.~Combastel, Q.~Zhang, and A.~Lalami.
\newblock Fault diagnosis based on the enclosure of parameters estimated with
  an adaptive observer.
\newblock {\em IFAC Proceedings Volumes}, 41(2):7314--7319, 2008.

\bibitem{jaulin2009nonlinear}
L.~Jaulin.
\newblock A nonlinear set membership approach for the localization and map
  building of underwater robots.
\newblock {\em IEEE Transactions on Robotics}, 25(1):88--98, 2009.

\bibitem{dahleh19871}
M.A. Dahleh and J.B. Pearson.
\newblock $\ell_1$-optimal feedback controllers for {MIMO} discrete-time
  systems.
\newblock {\em IEEE Transactions on Automatic Control}, 32(4):314--322, 1987.

\bibitem{kieffer2004guaranteed}
M.~Kieffer and E.~Walter.
\newblock Guaranteed nonlinear state estimator for cooperative systems.
\newblock {\em Numerical algorithms}, 37(1-4):187--198, 2004.

\bibitem{chisci1996recursive}
L.~Chisci, G.A., and G.~Zappa.
\newblock Recursive state bounding by parallelotopes.
\newblock {\em Automatica}, 32(7):1049--1055, 1996.

\bibitem{le2013zonotopic}
V.T.H. Le, C.~Stoica, T.~Alamo, E.F. Camacho, and D.~Dumur.
\newblock Zonotopic guaranteed state estimation for uncertain systems.
\newblock {\em Automatica}, 49(11):3418--3424, 2013.

\bibitem{girard2008efficient}
A.~Girard and Le~G.C.
\newblock Efficient reachability analysis for linear systems using support
  functions.
\newblock {\em IFAC Proceedings Volumes}, 41(2):8966--8971, 2008.

\bibitem{shamma1999set}
J.S. Shamma and K.~Tu.
\newblock Set-valued observers and optimal disturbance rejection.
\newblock {\em IEEE Trans. on Automatic Control}, 44(2):253--264, 1999.

\bibitem{vicino1996sequential}
A.~Vicino and G.~Zappa.
\newblock Sequential approximation of feasible parameter sets for
  identification with set membership uncertainty.
\newblock {\em IEEE Transactions on Automatic Control}, 41(6):774--785, 1996.

\bibitem{khajenejad2019simultaneous}
M.~Khajenejad and S.Z. Yong.
\newblock Simultaneous input and state set-valued
  $\mathcal{{H}}_{\infty}$-observers for linear parameter-varying systems.
\newblock In {\em American Control Conference (ACC)}, pages 4521--4526. IEEE,
  2019.

\bibitem{polyak2004ellipsoidal}
B.T. Polyak, S.A. Nazin, C{\'e}. Durieu, and E.~Walter.
\newblock Ellipsoidal parameter or state estimation under model uncertainty.
\newblock {\em Automatica}, 40(7):1171--1179, 2004.

\bibitem{khajenejadasimultaneous}
M.~Khajenejad and S.Z. Yong.
\newblock Simultaneous mode, input and state set-valued observers with
  applications to resilient estimation against sparse attacks.
\newblock In {\em Conference on Decision and Control (CDC)}, 2019.

\bibitem{zheng2016design}
G.~Zheng, D.~Efimov, and W.~Perruquetti.
\newblock Design of interval observer for a class of uncertain unobservable
  nonlinear systems.
\newblock {\em Automatica}, 63:167--174, 2016.

\bibitem{khajenejad2020simultaneousinterval1}
M.~Khajenejad and S.Z. Yong.
\newblock Simultaneous input and state interval observers for nonlinear systems
  with full-rank direct feedthrough.
\newblock In {\em 2020 59th IEEE Conference on Decision and Control (CDC)},
  pages 5443--5448. IEEE, 2020.

\bibitem{wang2015interval}
Y.~Wang, D.M. Bevly, and R.~Rajamani.
\newblock Interval observer design for {LPV} systems with parametric
  uncertainty.
\newblock {\em Automatica}, 60:79--85, 2015.

\bibitem{khajenejad2020simultaneousinterval2}
M.~Khajenejad and S.Z. Yong.
\newblock Simultaneous input and state interval observers for nonlinear systems
  with rank-deficient direct feedthrough.
\newblock In {\em European Control Conference}, 2021, accepted.

\bibitem{combastel2015merging}
C.~Combastel.
\newblock Merging kalman filtering and zonotopic state bounding for robust
  fault detection under noisy environment.
\newblock {\em IFAC-PapersOnLine}, 48(21):289--295, 2015.

\bibitem{scott2016constrained}
J.K. Scott, D.M. Raimondo, G.R. Marseglia, and R.D. Braatz.
\newblock Constrained zonotopes: A new tool for set-based estimation and fault
  detection.
\newblock {\em Automatica}, 69:126--136, 2016.

\bibitem{althoff2011zonotope}
M.~Althoff and B.H Krogh.
\newblock Zonotope bundles for the efficient computation of reachable sets.
\newblock In {\em 2011 50th IEEE conference on decision and control and
  European control conference}, pages 6814--6821. IEEE, 2011.

\bibitem{moore2009introduction}
R.E. Moore, R.B. Kearfott, and M.J. Cloud.
\newblock {\em Introduction to interval analysis}.
\newblock SIAM, 2009.

\bibitem{jaulin2016inner}
L.~Jaulin.
\newblock Inner and outer set-membership state estimation.
\newblock {\em Reliable Computing}, 22:47--55, 2016.

\bibitem{combastel2005state}
C.~Combastel.
\newblock A state bounding observer for uncertain non-linear continuous-time
  systems based on zonotopes.
\newblock In {\em Proceedings of the 44th IEEE Conference on Decision and
  Control}, pages 7228--7234. IEEE, 2005.

\bibitem{alamo2005guaranteed}
T.~Alamo, J.M. Bravo, and E.F. Camacho.
\newblock Guaranteed state estimation by zonotopes.
\newblock {\em Automatica}, 41(6):1035--1043, 2005.

\bibitem{alamo2008set}
T.~Alamo, J.M. Bravo, M.J. Redondo, and E.F. Camacho.
\newblock A set-membership state estimation algorithm based on {DC}
  programming.
\newblock {\em Automatica}, 44(1):216--224, 2008.

\bibitem{khajenejad2021tightremainder}
M.~Khajenejad and S.Z. Yong.
\newblock Tight remainder-form decomposition functions with applications to
  constrained reachability and interval observer design.
\newblock {\em IEEE Transactions on Automatic Control, submitted}, 2021,
  https://arxiv.org/pdf/2103.08638.pdf.

\bibitem{althoff2020cora}
M.~Althoff.
\newblock https://tumcps.github.io/cora/data/cora2020manual.pdf.

\bibitem{james1978generalised}
M.~James.
\newblock The generalised inverse.
\newblock {\em The Mathematical Gazette}, 62(420):109--114, 1978.

\bibitem{chen2018nonlinear}
B.~Chen and G.~Hu.
\newblock Nonlinear state estimation under bounded noises.
\newblock {\em Automatica}, 98:159--168, 2018.

\end{thebibliography}
}
\end{document}